\documentclass[aps,pra,showpacs,twocolumn,twoside,superscriptaddress]{revtex4-1}

\usepackage[utf8]{inputenc}
\usepackage[T1]{fontenc}
\usepackage[english]{babel}  

\usepackage{times}

\usepackage{amssymb,amsmath,amsfonts,amsthm}
\usepackage{mathtools}
\usepackage{verbatim}
\usepackage{enumerate}
\usepackage{graphicx}
\graphicspath{{pics/}}
\usepackage{hyperref}
\usepackage{bbm}
\usepackage{booktabs}
\usepackage{cleveref}

\usepackage[caption=false]{subfig}


\usepackage{color}
\definecolor{dred}{rgb}{.8,0.2,.2}
\definecolor{ddred}{rgb}{.8,0.5,.5}
\definecolor{dblue}{rgb}{.2,0.2,.8}
\definecolor{dgreen}{rgb}{.2,0.5,.2}

\newcommand{\ket}[1]{\ensuremath{|#1\rangle}}
\newcommand{\bra}[1]{\ensuremath{\langle#1|}}

\newcommand{\tr}{\mathop{\rm tr}\nolimits}

\def\B{\mathcal{B}}

\newtheorem{definition}{Definition}
\newtheorem{lemma}{Lemma}
\newtheorem{theorem}{Theorem}
\newtheorem{corollary}{Corollary}

\newcommand{\nc}{\newcommand}
\nc{\cA}{{\cal A}} \nc{\cB}{{\cal B}} \nc{\cC}{{\cal C}}
\nc{\cD}{{\cal D}} \nc{\cE}{{\cal E}} \nc{\cF}{{\cal F}}
\nc{\cG}{{\cal G}} \nc{\cH}{{\cal H}} \nc{\cI}{{\cal I}}
\nc{\cJ}{{\cal J}} \nc{\cK}{{\cal K}} \nc{\cL}{{\cal L}}
\nc{\cM}{{\cal M}} \nc{\cN}{{\cal N}} \nc{\cO}{{\cal O}}
\nc{\cP}{{\cal P}} \nc{\cQ}{{\cal Q}} \nc{\cR}{{\cal R}}
\nc{\cS}{{\cal S}} \nc{\cT}{{\cal T}} \nc{\cU}{{\cal U}}
\nc{\cV}{{\cal V}} \nc{\cW}{{\cal W}} \nc{\cX}{{\cal X}}
\nc{\cZ}{{\cal Z}}

\setcounter{MaxMatrixCols}{20}

\begin{document}


\title{Pure State Tomography with Pauli Measurements}

\author{Xian Ma}%
\affiliation{Institute for Quantum Computing,
University of Waterloo, Waterloo, Ontario, N2L 3G1, Canada}%
\affiliation{Department of Physics and Astronomy, University of Waterloo, Waterloo, Ontario N2L 3G1, Canada}
\author{Tyler Jackson}%
\affiliation{Department of Mathematics \& Statistics, University of
  Guelph, Guelph, Ontario, N1G 2W1, Canada}
\affiliation{Institute for Quantum Computing,
University of Waterloo, Waterloo, Ontario,  N2L 3G1, Canada}%
\author{Hui Zhou}%
\affiliation{Hefei National Laboratory for Physical Sciences at Microscale and Department of Modern Physics, University of Science
and Technology of China, Hefei, Anhui 230036, China}%
\affiliation{Synergetic Innovation Center of Quantum Information \& Quantum Physics,
University of Science and Technology of China, Hefei, Anhui 230026, China}%
\author{Jianxin Chen} %
\address{Joint Center for Quantum Information and Computer Science,
  University of Maryland, College Park, Maryland, USA}
\author{Dawei Lu}%
\affiliation{Institute for Quantum Computing,
University of Waterloo, Waterloo, Ontario, N2L 3G1, Canada}%
\affiliation{Department of Physics and Astronomy, University of Waterloo, Waterloo, Ontario N2L 3G1, Canada}
\author{Michael D. Mazurek}%
\affiliation{Institute for Quantum Computing,
University of Waterloo, Waterloo, Ontario,  N2L 3G1, Canada}%
\affiliation{Department of Physics and Astronomy, University of Waterloo, Waterloo, Ontario N2L 3G1, Canada}
\author{Kent A. G. Fisher}%
\affiliation{Institute for Quantum Computing,
University of Waterloo, Waterloo, Ontario,  N2L 3G1, Canada}%
\affiliation{Department of Physics and Astronomy, University of Waterloo, Waterloo, Ontario N2L 3G1, Canada}
\author{Xinhua Peng}%
\affiliation{Hefei National Laboratory for Physical Sciences at Microscale and Department of Modern Physics, University of Science
and Technology of China, Hefei, Anhui 230036, China}%
\affiliation{Synergetic Innovation Center of Quantum Information \& Quantum Physics,
University of Science and Technology of China, Hefei, Anhui 230026, China}%
\affiliation{Institute for Quantum Computing, University of Waterloo,
  Waterloo, Ontario, N2L 3G1, Canada}%
\author{David Kribs}%
\affiliation{Department of Mathematics \& Statistics, University of
  Guelph, Guelph, Ontario, N1G 2W1, Canada}
\affiliation{Institute for Quantum Computing,
University of Waterloo, Waterloo, Ontario, N2L 3G1, Canada}%
\author{Kevin J. Resch}%
\affiliation{Institute for Quantum Computing,
University of Waterloo, Waterloo, Ontario, N2L 3G1, Canada}%
\affiliation{Department of Physics and Astronomy, University of Waterloo, Waterloo, Ontario N2L 3G1, Canada}
\author{Zhengfeng Ji}%
\affiliation{Institute for Quantum Computing,
University of Waterloo, Waterloo, Ontario, N2L 3G1, Canada}%
\author{Bei Zeng}%
\affiliation{Department of Mathematics \& Statistics, University of
  Guelph, Guelph, Ontario, N1G 2W1, Canada}
\affiliation{Institute for Quantum Computing,
University of Waterloo, Waterloo, Ontario, N2L 3G1, Canada}%
\affiliation{Canadian Institute for Advanced Research, Toronto,
  Ontario, M5G 1Z8, Canada}
\author{Raymond Laflamme}%
\affiliation{Institute for Quantum Computing,
University of Waterloo, Waterloo, Ontario,  N2L 3G1, Canada}
\affiliation{Department of Physics and Astronomy, University of Waterloo, Waterloo, Ontario N2L 3G1, Canada}
\affiliation{Canadian Institute for Advanced Research, Toronto,
  Ontario, M5G 1Z8, Canada}
\affiliation{Perimeter Institute for Theoretical Physics, Waterloo, Ontario, N2L 2Y5,
Canada}%

\begin{abstract}
We examine the problem of finding the minimum number of
Pauli measurements needed to uniquely determine
an arbitrary $n$-qubit pure state among all quantum states.
We show that only $11$ Pauli
measurements are needed to determine an arbitrary
two-qubit pure state compared
to the full quantum state tomography with $16$ measurements,
and only
$31$ Pauli measurements are needed to determine an arbitrary
three-qubit pure state compared
to the full quantum state tomography with $64$ measurements. We demonstrate that our protocol is robust under depolarizing error with simulated random pure states. We experimentally test the protocol on two- and three-qubit systems with nuclear magnetic resonance techniques. We show that the pure state tomography
protocol saves us a number of measurements without considerable
loss of fidelity. We compare our protocol with same-size sets of
randomly selected Pauli operators and find that our selected set of Pauli
measurements significantly outperforms
those random sampling sets. As a direct application, our scheme can also be used to reduce the number of settings needed for pure-state tomography in quantum optical systems.
\end{abstract}

\date{\today}

\pacs{03.65.Wj, 03.65.Ud, 03.67.Mn}

\maketitle

\section{Introduction}

We consider a $d$-dimensional Hilbert space $\mathcal{H}_d$,
and denote $D(\mathcal{H}_d)$ the set of density operators acting on $\mathcal{H}_d$.
Assume that we measure a set of $m$ linearly independent observables
\begin{equation}
\label{eq:A}
\mathbf{A}=(A_0,A_1,A_2,\ldots,A_{m-1}),
\end{equation}
where each $A_i$ is Hermitian. Without loss of generality, we assume
$A_0={I}$ (i.e. the identity operator on $\mathcal{H}_d$), and $\tr A_i=0$ for $i=1,2,\ldots,m-1$.

Then for any $\rho\in D(\mathcal{H}_d)$,
the measurement returns a set of outcomes
\begin{equation}
\label{eq:A}
\boldsymbol{\alpha}=(\tr{\rho},\tr(\rho A_1),\tr(\rho A_2),\ldots,\tr(\rho A_{m-1})).
\end{equation}
Theoretically, we always have $\tr\rho=1$, however we keep this entry in
$\boldsymbol{\alpha}$ for the reason of experimental calibration~\cite{james01,deBurgh05,vandersypen2005nmr,oliveira2011nmr}.

For any $\rho\in D(\mathcal{H}_d)$, full quantum state tomography requires
$d^2$ measurement outcomes to determine $\rho$~\cite{NC00}. However
for a pure state $\ket{\psi}\in\mathcal{H}_d$, in general only order $d$
measurements are needed to determine $\ket{\psi}$. There is a slight difference
in interpreting the term `determine', as clarified in~\cite{chen2013uniqueness} and
summarized in the following definition. The
physical interpretation in this case is clear: it is useful in quantum tomography to have some prior knowledge that the state to be reconstructed is pure or nearly pure.
\begin{definition}
A pure state $\ket{\psi}$ is \emph{uniquely determined among pure states (UDP) by measuring $\mathbf{A}$} if there
does not exist any other pure state which has the same measurement results as those of $\ket{\psi}$ when measuring $\mathbf{A}$. A pure state $\ket{\psi}$ is \emph{uniquely determined among all states (UDA) by measuring $\mathbf{A}$} if there
does not exist any other state, pure or mixed, which has the same measurement results as those of $\ket{\psi}$ when measuring $\mathbf{A}$.
\end{definition}

It is known that there exists a family of $4d-4$ observables such that any $d$-dimensional pure state is UDP~\cite{HTW11}, and $5d-6$ observables such that any $d$-dimensional pure state is UDA~\cite{chen2013uniqueness}. Many other techniques for pure-state tomography have been developed, and experiments have been performed to demonstrate the reduction of the number of measurements needed~\cite{Wei92,AW99,Fin04,FSC05,li2015fisher,baldwin2015informational}. However, even if there are constructive protocols for the measurement set $\mathbf{A}$, in practice these sets may not be easy to measure in an experiment.

One idea of the compressed sensing protocols as discussed in~\cite{GLF+10,CPF+10} considers measurements of Pauli operators for
$n$-qubit systems, with Hilbert space dimension $d=2^n$. Since no joint measurements on multiple qubits are needed for Pauli operators, these operators are relatively easy to measure in practice.
It is shown that order $d\log d$ random Pauli measurements are sufficient to UDP almost
all pure states. That is, all pure states can be determined, up to a set of states with measure zero (i.e. `almost all' pure states are determined). Experiments also demonstrate the usefulness of this method in pure-state tomography in practice~\cite{LZL+12}. However, it remains open how many Pauli measurements are needed to determine all pure states (UDP or UDA) of an $n$-qubit system.

In this work, we examine the problem of the minimum number of Pauli operators needed to UDA all $n$-qubit pure states. For $n=1$ the number is known to be $3$, i.e. all three Pauli operators $X,Y,Z$ are needed. We solve the problem for $n=2$ and $n=3$, where at least $11$ Pauli operators are needed for $n=2$ and at least $31$ Pauli operators are needed for $n=3$. We then demonstrate that our protocol is robust under depolarizing error with simulated random pure states. We further implement our
protocol in our nuclear magnetic resonance (NMR) system and compare our result with other methods. As a direct application of this result, we show that our scheme can also be used to reduce the number of settings needed for pure-state tomography in quantum optical systems.

\section{Pure-state tomography using Pauli operators }

We consider the real span of the operators in $\mathbf{A}$, and denote it by $\mathcal{S}(\mathbf{A})$.
Let $\mathcal{S}(\mathbf{A})^{\perp}$ be the $(d^2-m)$-dimensional orthogonal complement subspace of $\mathcal{S}(\mathbf{A})$ inside $\mathbb{R}^{d^2}$.
It is known that a sufficient condition for any pure state $|\phi\rangle$ to be UDA by measuring $\mathbf{A}$ is that any nonzero Hermitian operator $H\in(\mathcal{S}(\mathbf{A}))^{\perp}$
have at least two positive and two negative eigenvalues~\cite{chen2013uniqueness}.
In fact, this is also a necessary condition. Otherwise, if the second-lowest eigenvalue of $H$ is non-negative, then the two states $\ket{\psi}\bra{\psi}$ and $H+\ket{\psi}\bra{\psi}$ are indistinguishable by only measuring $\mathbf{A}$ where $\ket{\psi}$ is the eigenvector of $H$ corresponds to the smallest eigenvalue. Note that without loss of generality, we can always assume the smallest eigenvalue of $H$ is greater than $-1$ which will guarantee $H+\ket{\psi}\bra{\psi}\geq 0$. A similar argument holds if the second-largest eigenvalue of $H$ is non-negative. We will then look for such sets $\mathbf{A}$ containing only Pauli operators, for two-qubit and three-qubit pure state tomography.

\subsection{Two-qubit system}

We denote the single-qubit Pauli operators by $\sigma_1=X,\sigma_2=Y,\sigma_3=Z$, and the identity
operator $\sigma_0=I$.  For a single qubit, it is straightforward to check that measuring
only two of the three operators cannot determine an arbitrary pure
state. Therefore all three Pauli operators are needed in the single-qubit case.

For the two-qubit system, there are a total of $16$ Pauli operators, including the identity. These are given by
the set $\{\sigma_i\otimes\sigma_j\}$ with $i,j=0,1,2,3$. For simplicity we omit the tensor product symbol by writing, e.g. $XY$ instead of $X\otimes Y$.
Of these $16$ Pauli operators, there exists a set of $11$ Pauli operators $\mathbf{A}$ such that $\mathbf{A}$ is UDA for any pure state, as given by the following theorem.

\begin{theorem}
\label{th:2qubits}
Any two-qubit pure state $|\phi\rangle$ is UDA by measuring the following set of Pauli operators.

\begin{eqnarray}
\label{eq:2qubit}
\mathbf{A}=&&\{II,IX,IY,IZ,XI,YX,\nonumber\\
&&YY,YZ,ZX,ZY,ZZ\},
\end{eqnarray}
and no set with fewer than $11$ Pauli operators can be UDA for
all two-qubit pure states.
\end{theorem}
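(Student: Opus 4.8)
The plan is to verify the two claims of the theorem separately, using the necessary-and-sufficient criterion stated just above: the set $\mathbf{A}$ is UDA for all pure states if and only if every nonzero Hermitian $H\in\mathcal{S}(\mathbf{A})^\perp$ has at least two strictly positive and two strictly negative eigenvalues. For the first part (that the listed $11$ operators suffice), I would compute the orthogonal complement inside the real span of the $16$ Pauli operators. Since $\mathbf{A}$ contains $11$ of the $16$ Paulis, $\mathcal{S}(\mathbf{A})^\perp$ is spanned by the $5$ omitted Paulis, namely $\{XX,XY,XZ,YI,ZI\}$. So a general element of the complement is $H=a_1 XX+a_2 XY+a_3 XZ+a_4 YI+a_5 ZI$ with real coefficients, and I would verify that any such nonzero $H$ has spectrum that is symmetric enough to force two positive and two negative eigenvalues. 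The key structural observation I would look for is that each of these five operators is traceless and that their pairwise products and anticommutation relations make $H^2$ proportional to the identity (or block-diagonal with a controlled spectrum); if $H^2=cI$ for some $c>0$ whenever $H\neq 0$, then $H$ has eigenvalues $\pm\sqrt{c}$ each with multiplicity two, which immediately gives exactly two positive and two negative eigenvalues as required.

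For that eigenvalue computation, the cleanest route is to show the five omitted operators, together with suitable structure, behave like components of a Clifford algebra so that $H^2 = (\sum_i a_i^2)\, I$ up to cross terms, and then check that the surviving cross terms either cancel or only shift the eigenvalues symmetrically. Concretely I would compute $H^2$ explicitly, collect the anticommutators $\{P_i,P_j\}$ for the five Paulis $P_i\in\{XX,XY,XZ,YI,ZI\}$, and confirm that the anticommuting pairs contribute nothing to the diagonal while the few commuting/anticommuting relations yield a scalar multiple of $I$. If this yields $H^2 = (\sum a_i^2) I$ exactly, the eigenvalue structure follows at once and the sufficiency direction is complete.

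For the optimality claim (no set of fewer than $11$ Paulis can be UDA), I would argue by showing that any set of at most $10$ Pauli operators must leave an orthogonal complement of dimension at least $6$ that contains a nonzero Hermitian operator violating the two-positive/two-negative condition. The natural approach is to show that any $6$-dimensional (or larger) subspace of the real span of traceless Paulis must contain an operator of rank at most $2$, or more precisely an operator whose second-smallest eigenvalue is nonnegative (equivalently, an operator of the form $\lambda(\ket{\psi}\bra{\psi})$ plus something with the wrong signature). Since a rank-one perturbation $\ket{\psi}\bra{\psi}$ lives in a space that can be intersected with any sufficiently large subspace, I would try to exhibit, for every choice of $10$ Paulis, an explicit Hermitian $H$ in the complement failing the eigenvalue condition, or give a dimension-counting/intersection argument guaranteeing its existence.

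The hard part will be the optimality direction. Sufficiency reduces to a finite, if tedious, algebraic verification that $H^2$ is a scalar, but the lower bound must rule out \emph{all} $\binom{16}{10}$ candidate subsets (or at least all up to Clifford/local symmetry), so the real challenge is to find the right invariant or symmetry-reduction that collapses this combinatorial explosion into a manageable case analysis. I expect the key idea to be exploiting the group structure of the Pauli operators under multiplication and the action of local Clifford operations to reduce the number of inequivalent $10$-element subsets, and then showing that each surviving representative admits a bad complement operator (for instance, one supported on a two-dimensional subspace, which always fails the two-negative-eigenvalue requirement).
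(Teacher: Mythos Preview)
Your sufficiency argument is correct and in fact cleaner than the paper's. The five omitted operators $\{XX,XY,XZ,YI,ZI\}$ do pairwise anticommute (each pair differs in exactly one tensor factor by anticommuting single-qubit Paulis), so your Clifford-algebra computation $H^2=(\sum_i a_i^2)I$ goes through exactly as you outlined, giving eigenvalues $\pm\sqrt{\sum a_i^2}$ each with multiplicity two. The paper also records the anticommutation, but for sufficiency it instead writes $H$ as an explicit $4\times 4$ matrix and computes $\det H$ as a sum of nonnegative terms, concluding $\det H>0$ for $H\neq 0$. Your $H^2$ route is more conceptual and avoids the coordinate computation.

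Your optimality plan, however, misses the simple idea that makes this direction easy. You propose a dimension-counting/Clifford-orbit case analysis over all $10$-element Pauli subsets to locate a bad operator in each complement. The paper's argument is much shorter: if $\mathbf{A}$ is UDA then \emph{every} pair of Pauli operators in $\mathcal{S}(\mathbf{A})^\perp$ must anticommute. Indeed, if two Paulis $P,Q$ in the complement commute, they are simultaneously diagonalizable with eigenvalues $\pm 1$; since $P\neq\pm Q$, some common eigenvector has $P$-eigenvalue $+1$ and $Q$-eigenvalue $-1$, so $P+Q$ has a zero eigenvalue and (being traceless $4\times 4$) cannot have two positive and two negative eigenvalues. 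Thus the omitted Paulis form a pairwise anticommuting set, and an exhaustive search (or the standard fact that a maximal anticommuting family of $n$-qubit Paulis has size $2n+1$) shows this forces at most $5$ omitted operators, hence $|\mathbf{A}|\geq 11$. Your proposed route could be made to work, but it is a detour around this two-line observation.
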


This is to say, $11$ is the minimum number of Pauli operators
needed to UDA any two-qubit pure state, and an example of such a set
with
$11$ Pauli operators is given in Eq.~\eqref{eq:2qubit}.

\begin{proof}
In order for $\mathbf{A}$ to UDA all two-qubit pure states it is known~\cite{chen2013uniqueness} that any Hermitian operator $H\in(\mathcal{S}(\mathbf{A}))^{\perp}$ must
have at least two positive and two negative eigenvalues.

In this case $(\mathcal{S}(\mathbf{A}))^{\perp}=\mathcal{S}(\{XX,XY,XZ,YI,ZI\})$. Note that the $5$ operators which are not measured all mutually anti-commute with each other. It is easy to see that this property is required for if two operators in $(\mathcal{S}(\mathbf{A}))^{\perp}$ commuted, then they would be simultaneously diagonalizable and a linear combination would exist which would have at least one 0-eigenvalue. Since two-qubit Pauli operators only have four eigenvalues total, having a single $0$ eigenvalue fails the UDA condition.

Furthermore it is easy to show by exhaustive search that there exists no set of more than $5$ mutually anti-commuting Pauli operators. So no fewer than $11$ Paulis could be measured.

To show that this set of $11$ Pauli operators is sufficient to be UDA, we construct a parametrization of all $H\in(\mathcal{S}(\mathbf{A}))^{\perp}$;
\begin{equation}
\label{eq:2qubit-proof}
H=\alpha_{1}XX+\alpha_{2}XY+\alpha_{3}XZ+\alpha_{4}YI+\alpha_{5}ZI
\end{equation}
and show that either $H$ has two positive and two negative eigenvalues or $H= 0$. Note that $H$ then has the following form:
\begin{eqnarray*}
\begin{bmatrix}
\alpha_5 & 0 & \alpha_3+\alpha_4i & \alpha_1+\alpha_2i\\
0 & \alpha_5 & \alpha_1-\alpha_2i & -\alpha_3+\alpha_4i\\
\alpha_3-\alpha_4i & \alpha_1+\alpha_2i & -\alpha_5 & 0\\
\alpha_1-\alpha_2i & -\alpha_3-\alpha_4i & 0 & -\alpha_5\\
\end{bmatrix}.
\end{eqnarray*}

The determinant of $H$ can be calculated and the result is:
\begin{eqnarray*}
&\alpha_5^4+\alpha_5^2\vert\alpha_3+\alpha_2i\vert^2+\alpha_5^2\vert\alpha_1+\alpha_2i\vert^2 \nonumber \\
+&\vert\alpha_3-\alpha_4i\vert^4+\vert\alpha_3-\alpha_4i\vert^2\vert\alpha_1+\alpha_2i\vert^2+\vert\alpha_3-\alpha_2i\vert^2\alpha_5^2 \nonumber \\
+&\vert\alpha_1-\alpha_2i\vert^4+\vert\alpha_1-\alpha_2i\vert^2\vert\alpha_3-\alpha_4i\vert^2+\vert\alpha_1-\alpha_2i\vert^2\alpha_5^2.
\end{eqnarray*}

This quantity, being the sum of non-negative terms, is greater than or equal to 0. Equality is reached if and only if all terms in the sum are 0, which only occurs when $\alpha_1=\alpha_2=\alpha_3=\alpha_4=\alpha_5=0$. Since $H$ is a $4$-by-$4$ traceless Hermitian matrix, it can only have positive determinant if and only if it has exactly two positive and two negative eigenvalues.

The same logic follows for any set that is unitarily equivalent to this set. A particular class of unitary operators which maps the set of Pauli operators to itself is called the Clifford group. Thus, the set $\mathbf{A}$ and any set which is Clifford equivalent to it are our optimum sets of Pauli measurement operators for two-qubit pure-state tomography.
\end{proof}

\subsection{Three-qubit system}

The situation for the $3$-qubit case is much more complicated.
We start by noticing that
\begin{eqnarray}
V&=&IIZ+IZI+ZII+ZZZ\nonumber\\
&=&4\left(|000\rangle\langle000|-|111\rangle\langle111|\right)
\end{eqnarray}
has one positive and one negative eigenvalue. Therefore, if  the set $\mathbf{F}_{1}=\{IIZ,IZI,ZII,ZZZ\}$ is a subset of $\mathcal{S}(\mathbf{A})^{\perp}$, the set $\mathbf{A}$ cannot UDA all pure states.
Similarly any set $\mathbf{F}_{i}$ which is Clifford equivalent to $\mathbf{F}_{1}$ cannot be a subset of $\mathcal{S}(\mathbf{A})^{\perp}$. Sets such as these we call failing sets.

\begin{definition}
A failing set $\mathbf{F}$ is a set of Pauli operators such that there exists a nonzero real combination of elements chosen from $\mathbf{F}$ such that it has only $1$ positive eigenvalue or $1$ negative eigenvalue.
\end{definition}

Namely, for an arbitrary pure state $|\phi\rangle$ to
be UDA by measuring operators in a set $\mathbf{A}$, $\mbox{span(}\mathbf{F}_{i})\not\subset(\mbox{span}(\mathbf{A}))^{\perp}$
holds for every set $\mathbf{F}_{i}$ that is Clifford equivalent to $\mathbf{F}_{1}$.
Thus, for all 945 sets of $\mathbf{F}_{i}$, at least one element in each $\mathbf{F}_{i}$
should be included in $\mbox{span}(\mathbf{A})$.

\begin{theorem}
\label{th:3qubits}
The following set of $31$ Pauli operators are sufficient to UDA
any given three-qubit pure state $|\phi\rangle$
\begin{eqnarray}
\label{eq:3qubit}
\mathbf{A} & = & \{IIX,IIY,IIZ,IXI,IXX,IXY,IYI,IYX,\nonumber \\
 &  & IYY,IZI,XIZ,XXX,XXY,XYX,XYY,\nonumber \\
 &  & XZX,XZY,YXX,YXY,YXZ,YYX,YYY,\nonumber \\
 &  & YYZ,YZI,ZII,ZXZ,ZYZ,ZZX,ZZY,\nonumber \\
 &  & ZZZ,III\}, \label{pauli3}
\end{eqnarray}
and no set with less than $31$ Pauli operators can be UDA for
all three-qubit pure states.
\end{theorem}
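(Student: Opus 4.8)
The plan is to keep the logical skeleton of the two-qubit proof but to replace its single determinant computation with a combinatorial argument for the lower bound and a structural eigenvalue analysis for the upper bound. Throughout I use the criterion recalled above~\cite{chen2013uniqueness}: $\mathbf{A}$ is UDA exactly when every nonzero Hermitian $H\in(\mathcal{S}(\mathbf{A}))^{\perp}$ has at least two positive and two negative eigenvalues. Here $(\mathcal{S}(\mathbf{A}))^{\perp}$ is spanned by the $64-31=33$ Pauli operators that are \emph{not} measured, so any violating $H$ is a real combination of unmeasured Paulis, and the Paulis appearing in it with nonzero coefficient form a failing set lying entirely inside the unmeasured collection. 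Hence UDA is equivalent to the purely combinatorial statement that the unmeasured Paulis contain no failing set, and the problem reduces to hitting every failing set with as few measured Paulis as possible.

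For the lower bound I would begin from the failing sets already exhibited: $\mathbf{F}_{1}=\{IIZ,IZI,ZII,ZZZ\}$ and its $945$ Clifford images, each of which forces a rank-two, single-sign-pair operator if it sits entirely in $(\mathcal{S}(\mathbf{A}))^{\perp}$. Any UDA set must therefore contain at least one operator from each of these $945$ sets, and more generally from every failing set of any other Clifford type. I would enumerate the relevant failing-set orbits and compute the minimum transversal of the resulting hypergraph on the $63$ nonidentity Paulis, exploiting the transitive action of the Clifford group both on the vertices and on each orbit of failing sets to reduce the search. The goal is to certify that no family of more than $33$ nonidentity Paulis avoids all failing sets, equivalently that at least $30$ nonidentity operators (so $31$ in total, counting $III$) must be measured.

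For sufficiency I would take the explicit $\mathbf{A}$ of Eq.~\eqref{eq:3qubit} and show that its $33$-dimensional complement contains no failing operator at all. This is where the two-qubit mechanism breaks down: there the five unmeasured Paulis mutually anticommute, so any combination $H$ satisfies $H^{2}\propto I$ and has an automatically balanced spectrum, whereas here $H$ is a genuine $8\times 8$ matrix carrying up to $33$ real parameters and the unmeasured Paulis are far from mutually anticommuting. My plan is to organize the unmeasured Paulis by their commutation relations, pass to a maximal abelian (stabilizer) subgroup to block-diagonalize $\mathcal{H}$, and try to revive a per-block determinant or $H^{2}$ argument on the smaller blocks. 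Failing a clean reduction, I would attempt to classify the \emph{minimal} failing sets up to Clifford equivalence; if the $\mathbf{F}_{i}$-type sets (together with finitely many other orbits) exhaust them, then hitting all of them is not only necessary but sufficient, and sufficiency collapses to the same transversal check as the lower bound applied to the specific $\mathbf{A}$.

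I expect this verification of sufficiency to be the main obstacle. Certifying the eigenvalue signature of \emph{every} element of a $33$-dimensional space of $8\times 8$ Hermitian matrices is not a finite check on its face. The route I would push hardest uses homogeneity---the signature condition is scale invariant, so one may restrict the coefficient vector to the unit sphere, a compact set---together with the Clifford symmetry and the commutation structure to turn the requirement that the second-largest eigenvalue stay positive and the second-smallest stay negative into a tractable global optimization, amenable to a sum-of-squares or interval certificate. The delicate point is that these eigenvalues could in principle approach zero along some direction, so the certificate must establish a strict, uniform bound away from zero rather than mere nonvanishing at sampled points.
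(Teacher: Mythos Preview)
Your lower-bound plan is essentially what the paper does: it uses the single Clifford orbit of $\mathbf{F}_1=\{IIZ,IZI,ZII,ZZZ\}$ (the $945$ sets) and solves the hitting-set problem to see that no $33$ nonidentity Paulis can avoid them all, so at least $31$ measurements (including $III$) are forced. That part of your proposal is on track.

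The gap is sufficiency. None of your three routes is a proof, and each is unlikely to close. Passing to a maximal abelian subgroup does not block-diagonalize $H$: the $33$ unmeasured Paulis here do not sit inside a common centralizer the way the five anticommuting two-qubit Paulis did, so there is no $H^2\propto I$ or per-block determinant to revive. Your second idea, reducing sufficiency to ``hit every minimal failing set,'' is circular: the only way to certify that your list of minimal failing sets is complete is to prove that no real combination of the $33$ unmeasured Paulis has a one-sided spectrum, which is exactly the sufficiency statement. And a numerical SOS/interval certificate on the unit sphere would at best give a computer-assisted bound, not the argument you are after, and the second eigenvalue genuinely can approach zero in degenerate directions, so a uniform strict bound does not exist. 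The paper's sufficiency proof is nothing like any of these. It writes $H$ explicitly, assumes it has only one positive eigenvalue, and exploits Cauchy's interlacing theorem: every principal submatrix then has at most one positive eigenvalue, so in particular each $4\times4$ traceless principal submatrix has nonpositive determinant. The anti-diagonal zeros (deliberately arranged via a Clifford move) make many of these $4\times4$ determinants into nonnegative squares $|c_{ij}c_{kl}-c_{il}c_{kj}|^2$, forcing them to vanish; this shows the column pairs $(k,9-k)$ are each rank~$1$, cutting $\mathrm{rank}\,H\le4$. A lengthy, entirely explicit case split (parametrized by the proportionality constants $\lambda_1,\ldots,\lambda_4$ between paired columns, with the special case $\Re\lambda_k=\tfrac12$ handled by symbolic computation) then drives $H$ to zero. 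The key idea you are missing is this systematic use of interlacing on $4\times4$ principal minors to convert the one-positive-eigenvalue hypothesis into rank constraints on $H$.
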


Similarly to the two-qubit case this set is obtained by finding the largest set of Pauli operators which do not contain any of the identified failing sets and taking the complement producing the smallest set of measurement operators which could UDA all pure states.

To show that this set $\mathbf{A}$  will be UDA for any pure state, we look at the traceless Hermitian operator $H\in(\mbox{span}(\mathbf{A}))^{\perp}$,
where

\begin{eqnarray*}
H & = & \alpha_{1}IXZ+\alpha_{2}IYZ+\alpha_{3}IZX+\alpha_{4}IZY+\alpha_{5}IZZ\nonumber \\
 &  & +\alpha_{6}XII+\alpha_{7}XIX+\alpha_{8}XIY+\alpha_{9}XXI\nonumber \\
 &  & +\alpha_{10}XXZ+\alpha_{11}XYI+\alpha_{12}XYZ+\alpha_{13}XZI\nonumber \\
 &  & +\alpha_{14}XZZ+\alpha_{15}YII+\alpha_{16}YIX+\alpha_{17}YIY\nonumber \\
 &  & +\alpha_{18}YIZ+\alpha_{19}YXI+\alpha_{20}YYI+\alpha_{21}YZX\nonumber \\
 &  & +\alpha_{22}YZY+\alpha_{23}YZZ+\alpha_{24}ZIX+\alpha_{25}ZIY\nonumber \\
 &  & +\alpha_{26}ZIZ+\alpha_{27}ZXI+\alpha_{28}ZXX+\alpha_{29}ZXY\nonumber \\
 &  & +\alpha_{30}ZYI+\alpha_{31}ZYX+\alpha_{32}ZYY+\alpha_{33}ZZI.
\end{eqnarray*}

It can be shown that $H$ either has at least two positive and two
negative eigenvalues or $H=0$ (see Appendix for details). Therefore, set $\mathbf{A}$
and any set which is Clifford equivalent to it are our optimum Pauli measurement sets for $3$-qubit pure-state tomography.

\section{Stability of the protocol against depolarizing noise}

Before we test our protocol experimentally, we would like to understand how robust it is given states that are not pure. Due to noise in the implementation, we often end up with some
mixed state which is close to our ideal pure state. Therefore, for
the protocol to work in practice, one requires it to return a density
matrix with high fidelity with respect to our input state when it has high purity.
We generate a random pure state $|\phi\rangle$ from the Haar measure as
our desired ideal state, then run it through a depolarizing channel
to get a noisy mixed state $\rho=\eta \frac{ I }{d}+(1-\eta)|\phi\rangle\langle\phi|$.
We could then generate all Pauli measurement results $\{\mbox{Tr}(\rho\sigma_{k})=M_{k}\}$,
where $\sigma_{k}$ is Pauli observable of given dimension. Pick results
determined by our optimum Pauli measurement set, we run a maximum likelihood
estimation to get a density matrix reconstruction. Our
protocol is tested over a range of different $\eta$, and the
results are shown in Figure~\ref{Fig1}. We can see that for small noise $\eta$,
the simulated state is very close to pure, and the protocol returns a
high-fidelity density matrix reconstruction. As noise $\eta$
increases, the pure state assumption becomes less useful, and our protocol yields
a low fidelity estimation.

\begin{figure}[htb]

\includegraphics[width=0.4\textwidth]{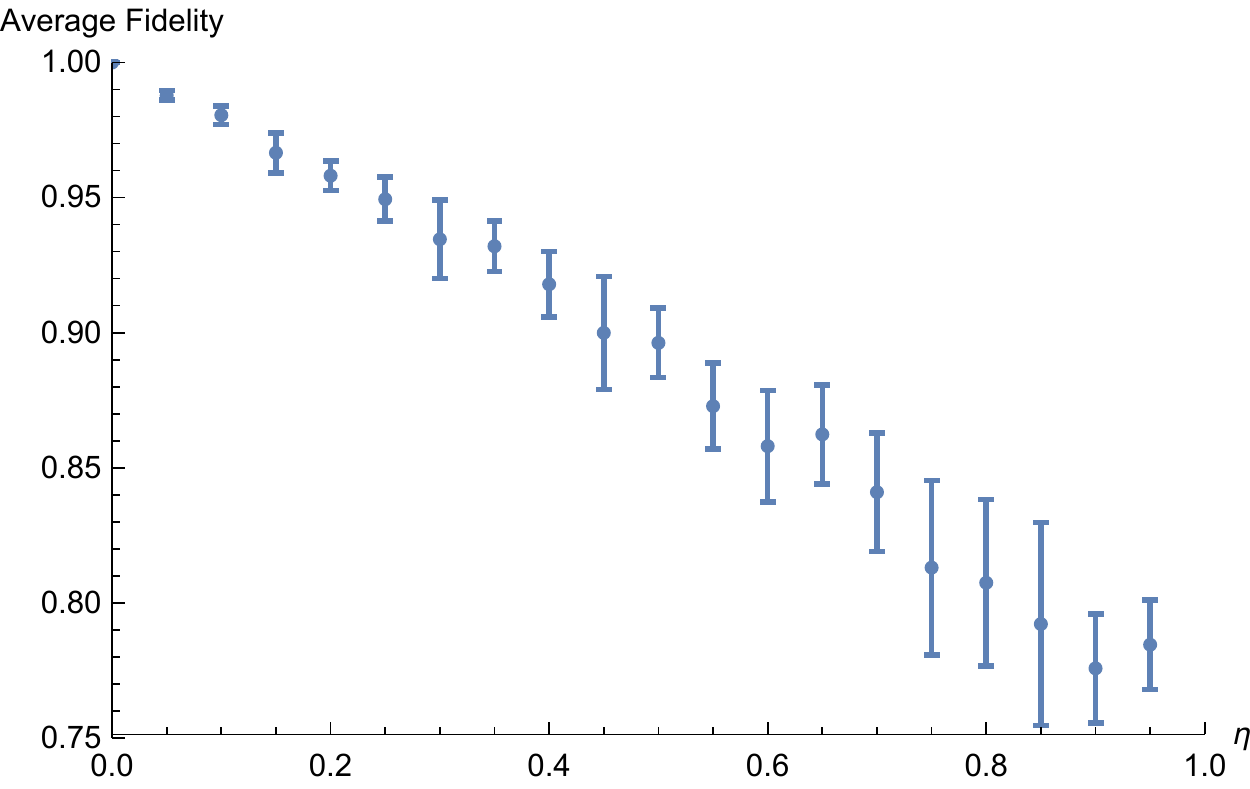}

\caption{The average fidelity of reconstructed density matrices compared to
the ideal state using an optimum Pauli measurement set for 3 qubits. The error bars
are given by standard deviation of the said fidelity over 100 instances. For small noise $\eta$,
the state is very close to pure, and the protocol returns a
high fidelity density matrix reconstruction. As noise $\eta$
increases, the pure state assumption becomes less useful, which yields
a low fidelity estimation. }
\label{Fig1}
\end{figure}

\section{Experiments in NMR
systems}

A nuclear magnetic resonance (NMR) system is an ideal testbed for our protocol. However,
the creation of a pure state in NMR requires unrealistic experimental conditions such as extremely low temperatures or high magnetic fields, which makes it impractical for a liquid sample. To overcome this problem, one can prepare a pseudo-pure state (PPS) alternatively
\begin{align}
\rho_{\text{PPS}}=\frac{1-\epsilon}{2^N}{\mathbb{I}}+\epsilon|\phi\rangle\langle\phi|,
\end{align}
where $\mathbb{I}$ is the identity matrix and $\epsilon\sim10^{-5}$ represents the polarization. For a traceless Pauli observable $\sigma$, only the pure state portion $\epsilon|\phi\rangle\langle\phi|$ contributes to the measurement result. Therefore, the behavior of a system in the PPS is exactly the same as it would be in the pure state.

To test our protocol, we carried out the experiments in $2$- and $3$-qubit NMR quantum
systems, respectively. The qubits in the $2$-qubit system are denoted by the $^{13}$C and $^{1}$H spins of $^{13}$C-labeled Chloroform diluted in acetone-d6 on a Bruker DRX-500 MHz spectrometer, and in the $3$-qubit system by the $^{13}$C, $^{1}$H and $^{19}$F spins in Diethyl-fluoromalonate dissolved in d-chloroform on a Bruker DRX-400 MHz spectrometer. The molecular structures and relevant parameters are shown in Fig. \ref{fig:molecule}, and the corresponding natural Hamiltonian for each system can be described as
\begin{align}\label{Hamiltonian}
\mathcal{H}_{int} = \sum\limits_{i = 1} {\pi\nu_i \sigma _z^i}  + \sum\limits_{i < j, = 1} {\frac{{\pi {J_{ij}}}}{2}\sigma _z^i\sigma _z^j},
\end{align}
where $\nu_i$  is the resonance frequency of spin $i$ and $J_{ij}$ are the scalar coupling constants between spins $i$ and $j$. All parameters are listed in the right table of Fig. \ref{fig:molecule}. Note that in experiment we set $\nu_i=0$ in the multi-rotating frame for simplicity.

In experiment, the entire tomography process for a PPS becomes: given measurements $\mbox{Tr}(\rho\sigma_{k})= \epsilon\mbox{Tr}(\rho_{t}\sigma_{k})=M_{k}$, find a density matrix $\rho_{rec}$ to best fit the data $M_k$. In order to evaluate the performance of our protocol, two comparisons will be made. First, we compare the reconstructed state using the optimum number of Pauli measurements with the one obtained with full tomography. It gives us an idea how good the reconstruction is, and whether the protocol works. Second, we compare our result with the state reconstructed by randomly choosing Pauli measurements. This tells us how different the performance is between selecting the optimum set and a random set of Pauli measurements.

\subsection{Pure state tomography for a 2-qubit state }

\begin{figure}[htb]%
\begin{center}
\includegraphics[width= .9\columnwidth]{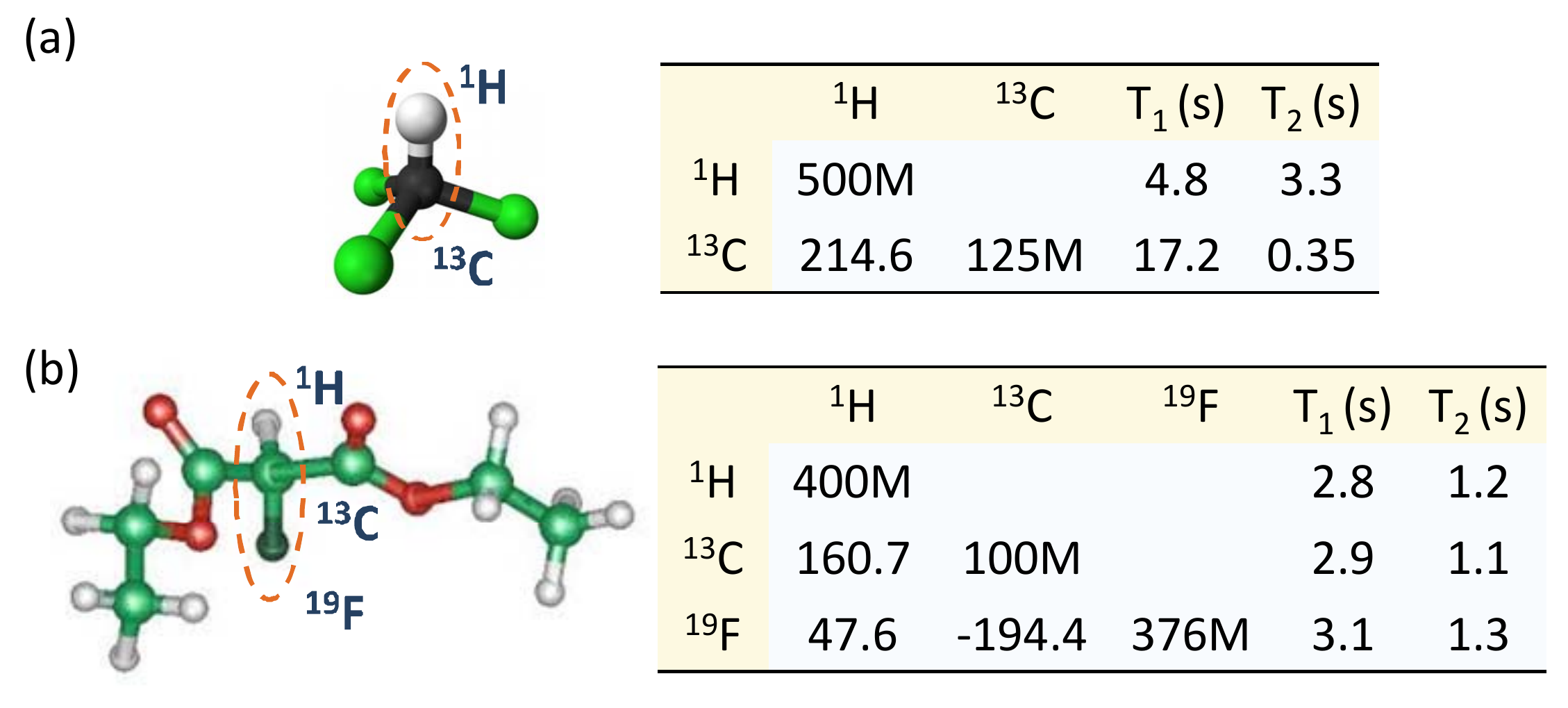}%
\end{center}
\setlength{\abovecaptionskip}{-0.1cm}
\caption{Molecular structure of (a) 2-qubit sample $^{13}$ C-labeled Chloroform and (b) 3-qubit sample Diethyl-fluoromalonate. The corresponding tables on the right side summarize the relevant NMR parameters at room temperature, including the Larmor frequencies (diagonal, in Hertz), the J-coupling constant (off-diagonal, in Hertz) and the relaxation time scales $T_1$ and $T_2$. }\label{fig:molecule}
\end{figure}

For the $2$-qubit protocol, the system is firstly initialized to the PPS
\begin{align}\label{pps2}
\rho_{00}=\frac{1-\epsilon}{4}{\mathbb{I}}+\epsilon\ket{00}\bra{00}
\end{align}
via spatial average technique \cite{cory1997ensemble,lu2015nmr}, and the NMR signal of this PPS is used as references for further comparisons with the tomographic results. We then turn on the transversal field with the strength $\omega_{x}$ (in terms of radius), so in double-rotating frame the Hamiltonian becomes
\begin{align}\label{Hamiltoniantrans}
\mathcal{H}=\frac{\omega_{x}}{2}\left(\sigma_{x}^{1}+\sigma_{x}^{2}\right)+\pi \frac{J_{12}}{2} \sigma_{z}^{1}\sigma_{z}^{2}
\end{align}
By ignoring the identity in $\rho_{00}$, the system should evolve to a time-dependent pure state
\begin{align}\label{pure2}
\ket{\phi}=\alpha (t)|00\rangle+\beta (t) (|01\rangle+|10\rangle)/{\sqrt{2}}+\gamma(t)|11\rangle.
\end{align}
We measured in total 16 different states at a few different time steps and the corresponding Pauli
observables for each state. The reconstructed density matrices for the first and sixteenth
experiments are shown in Figs. \ref{NMRfig1} and \ref{NMRfig2}, respectively. Note that as
the time progresses, the relaxation becomes more prominent, where
the purity of state $\mbox{Tr}(\rho^{2})$ drops. Since our protocol
is designed for pure-state tomography, the performance of our protocol is expected
to drop along with the decrease of purity in a quantum state.

In order to further demonstrate the advantages of our protocol, we
compare it to a quantum state tomography with
Pauli measurements. Using the same number of random Pauli measurements,
one could also perform the maximum likelihood method to get a reconstruction
of the density matrix. Note that the optimum set of $11$ Pauli measurements may
be randomly hit in this case, which means the best performance
of random Pauli measurement algorithm is the same compared with
our protocol. However, in a realistic setting, only one set of
random Pauli measurements will be chosen. To show the advantage of our protocol, we only have to outperform the
average case of this random algorithm.

We randomly generated $11$ distinct 2-qubit Pauli measurements (including identity),
and used the maximum likelihood method to get an estimate of our density matrix.
If the density matrix given by this set of measurements is not unique,
the maximum likelihood method runs multiple times to get an average
estimation. For each experiment, 100 sets of random Pauli measurements
were chosen. The result is shown in Fig. \ref{NMRfig2}. We can see
that for high purity, our method significantly outperforms the random
Pauli algorithm. The advantage decreases as purity decreases, which indicates
our method is more efficient for a state that is close to pure.

\begin{figure}[htb]
\includegraphics[width=0.2\textwidth]{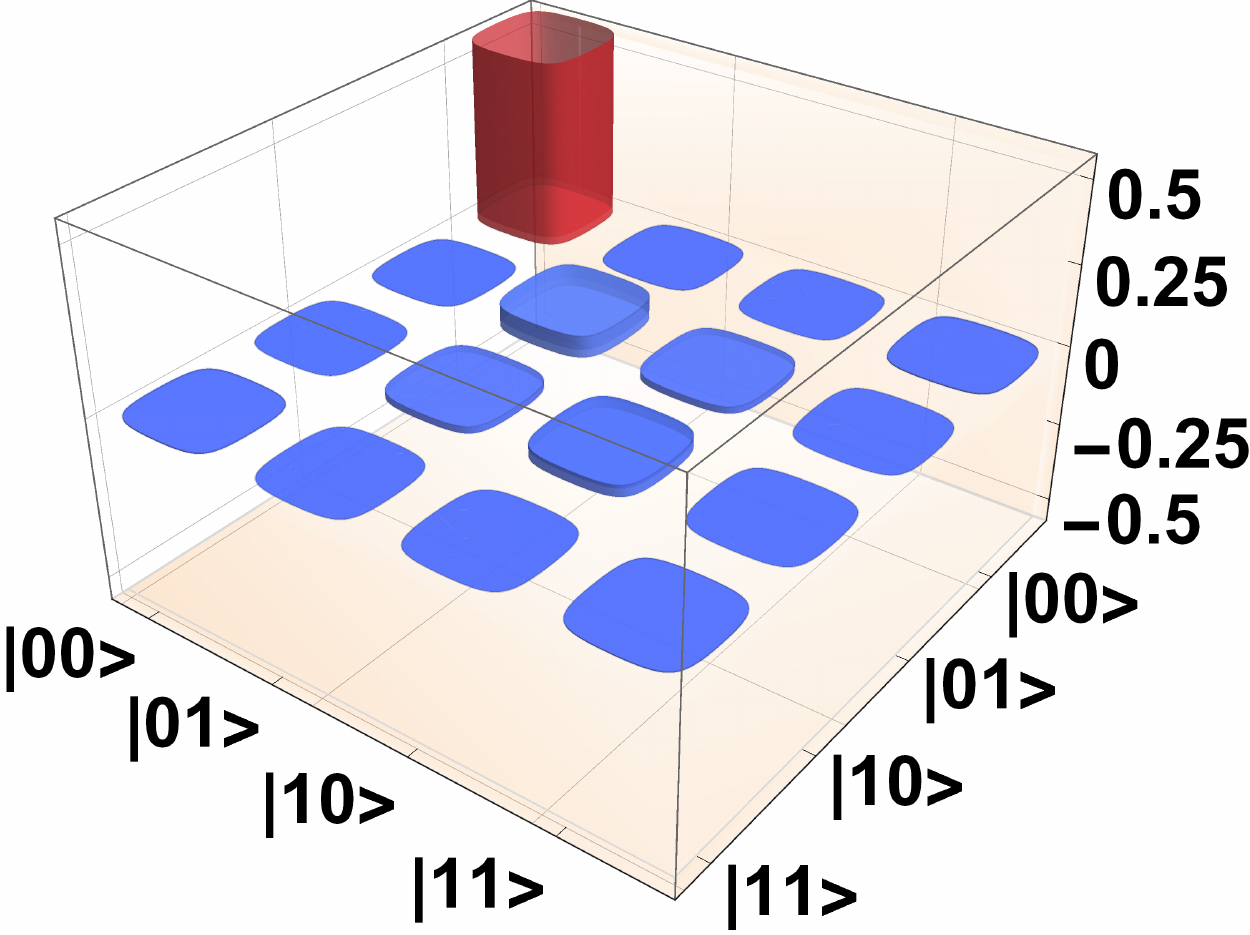}
\includegraphics[width=0.2\textwidth]{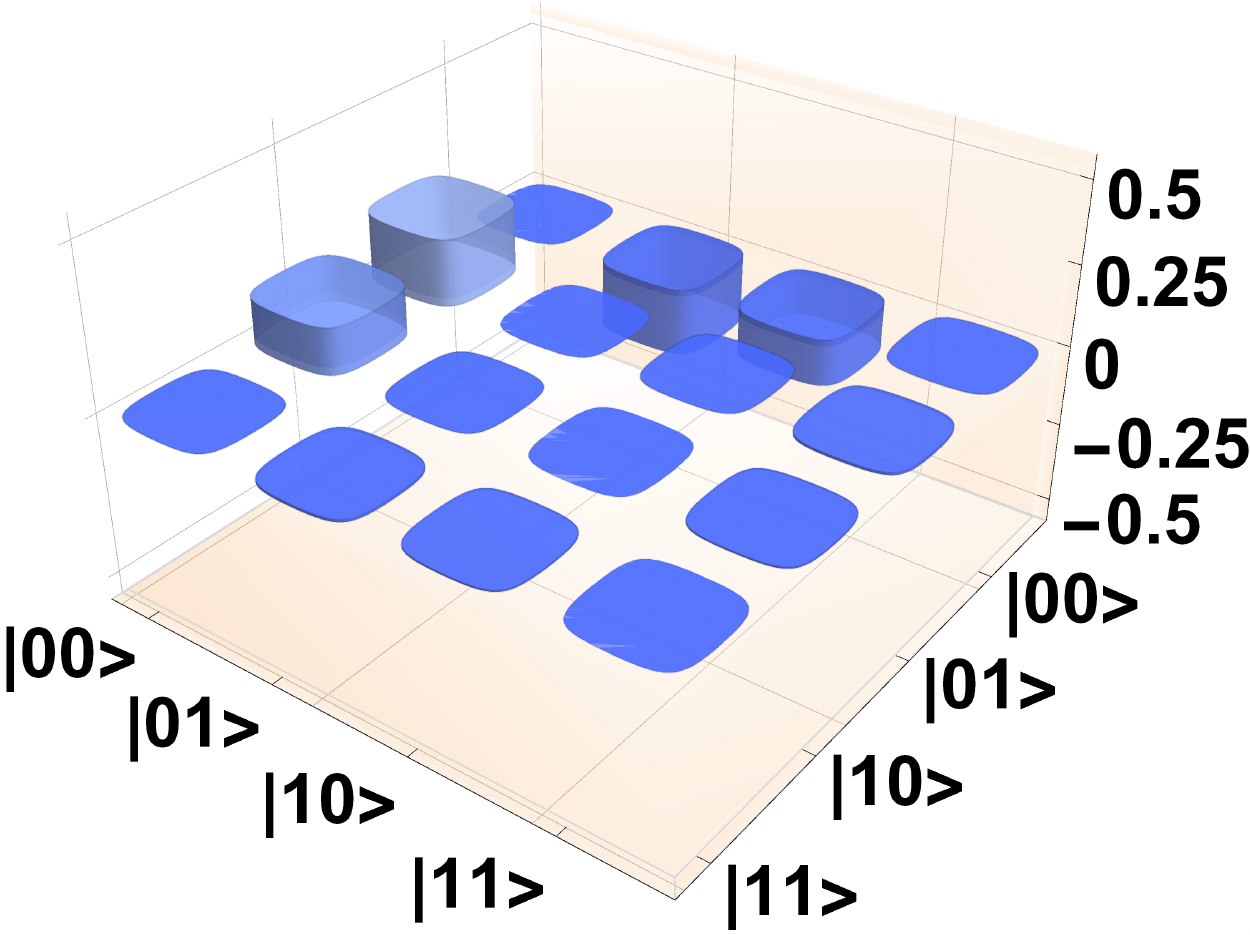}
\includegraphics[width=0.2\textwidth]{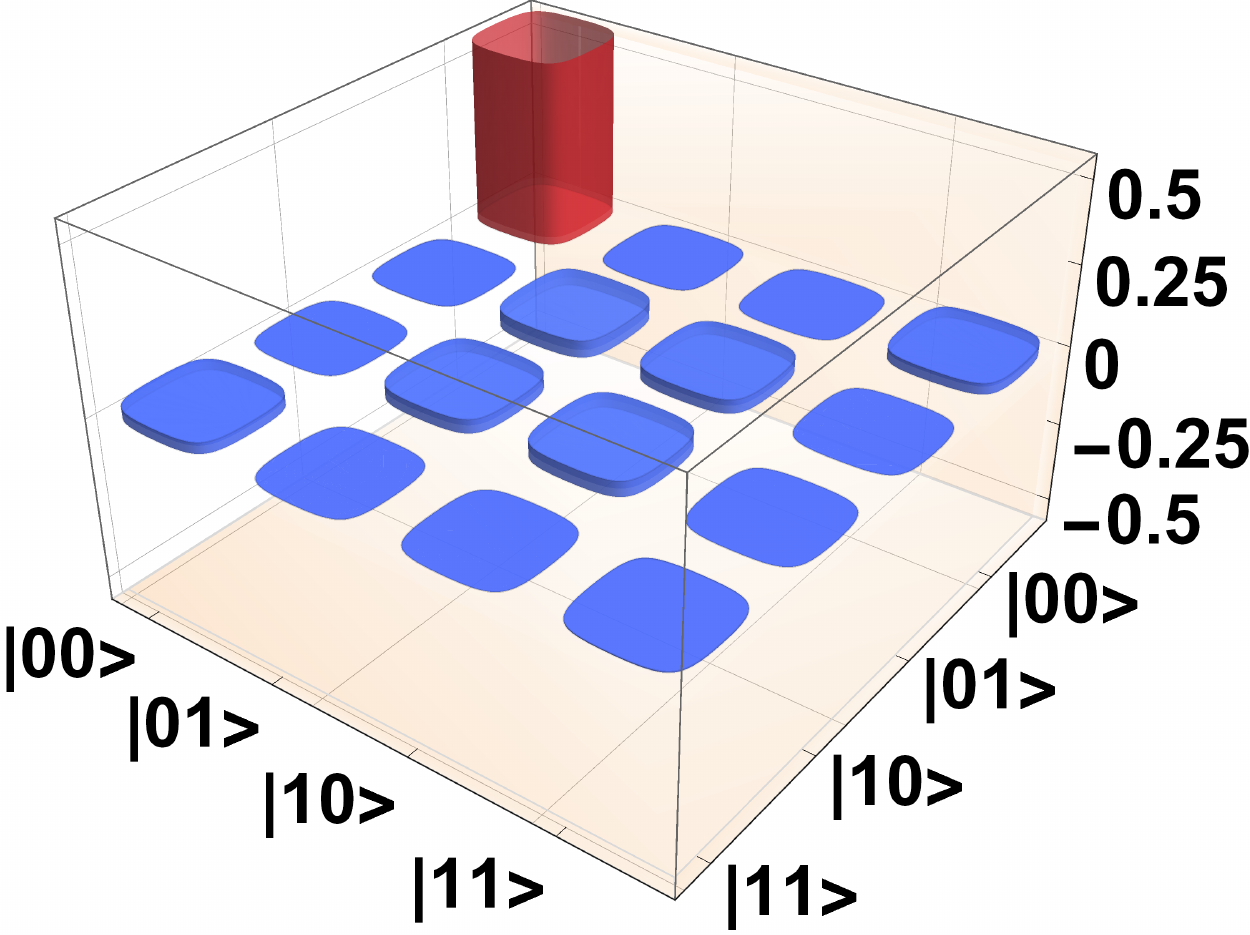}
\includegraphics[width=0.2\textwidth]{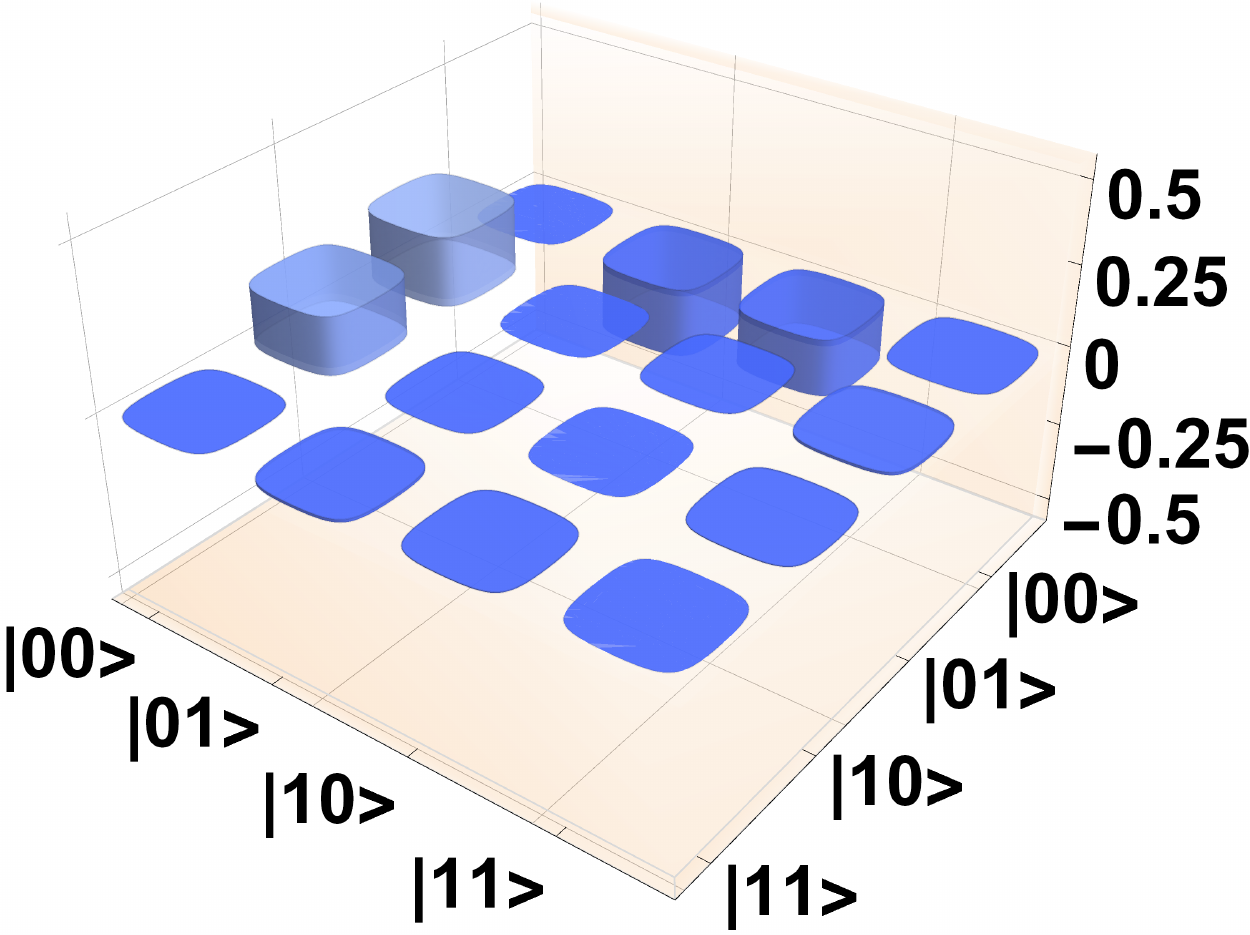}

\caption{The reconstruction of density matrix for state number one. The upper
two figures are real and imaginary part of density matrix of state
reconstruction using all $16$ Pauli measurements. The bottom two figures
are real and imaginary part of density matrix of state reconstruction
using $11$ optimum Pauli measurements described earlier. The fidelity
between the two density matrices is 0.992. }\label{NMRfig1}
\end{figure}

\begin{figure}[htb]
\includegraphics[width=0.4\textwidth]{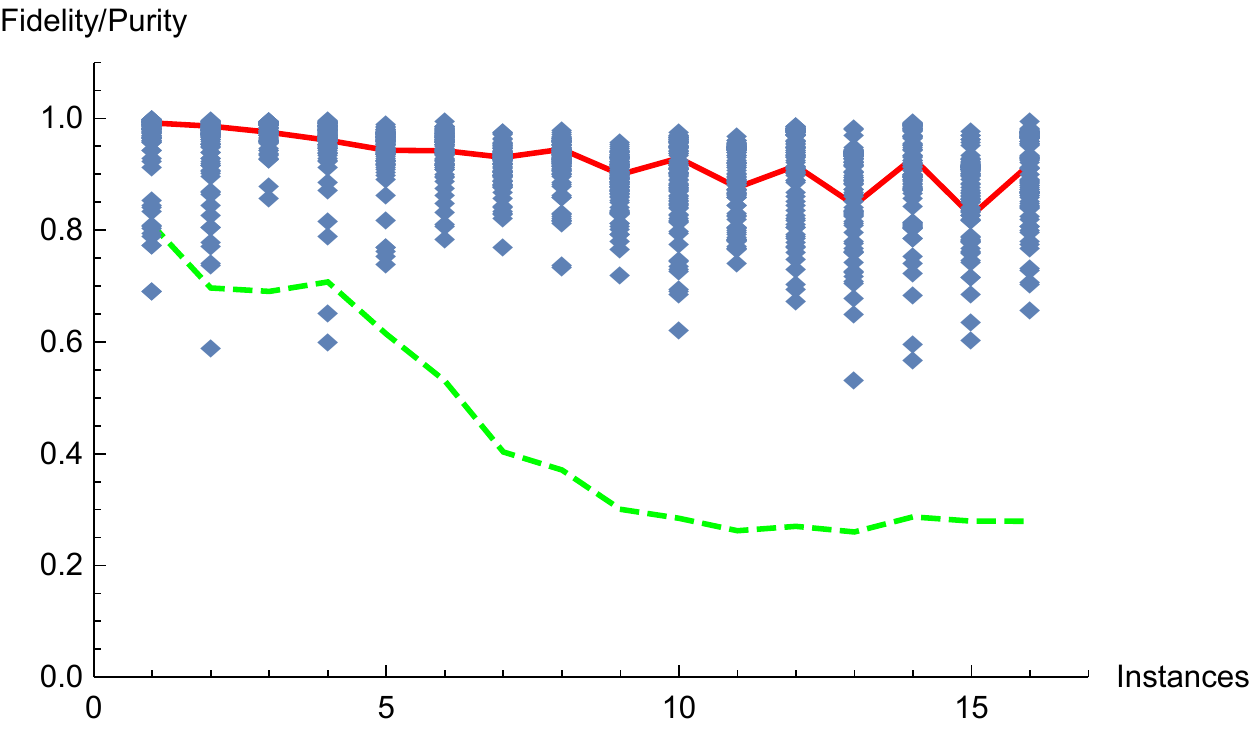}
\caption{Performance of 2-qubit protocol using selected Pauli
measurements against randomly Pauli measurements.The
blue diamond dots are the fidelity between density matrix reconstructed
from all $16$ Pauli measurements with density matrix reconstructed from
random $11$ Pauli measurements. The red line represents fidelity of
reconstruction using our protocol, and the green dashed line shows
the purity of density matrix reconstructed from all Pauli measurements.} \label{NMRfig2}
\end{figure}

\subsection{Pure state tomography for a 3-qubit state }

For $3$-qubit system, we are interested in the GHZ state $|\text{GHZ}\rangle=(|000\rangle+|111\rangle)/\sqrt{2}$. Here, we measured all $64$ Pauli measurements, and only use $31$ of
them described in Eq. \ref{pauli3} for our protocol.
As shown in Fig. \ref{NMRfig3}, only using less than half of the desired measurements, we reconstructed density matrices for the GHZ state with 0.96 fidelity. We then compare it to a quantum state tomography algorithm implementing $31$ random Pauli measurements (including identity). Since the number of unused Pauli measurements
are much more compared to the 2-qubit case, we are less likely to
hit the optimum set in this random algorithm. By implementing a similar
maximum likelihood reconstruction, we found the average fidelity of
this random algorithm to be 0.87 with standard deviation of 0.16.
The detailed result is shown in Fig. \ref{NMRfig4}, which shows clearly that our
protocol has a decent advantage over the average case in the randomized
algorithm.

\begin{figure}[htb]
\includegraphics[width=0.2\textwidth]{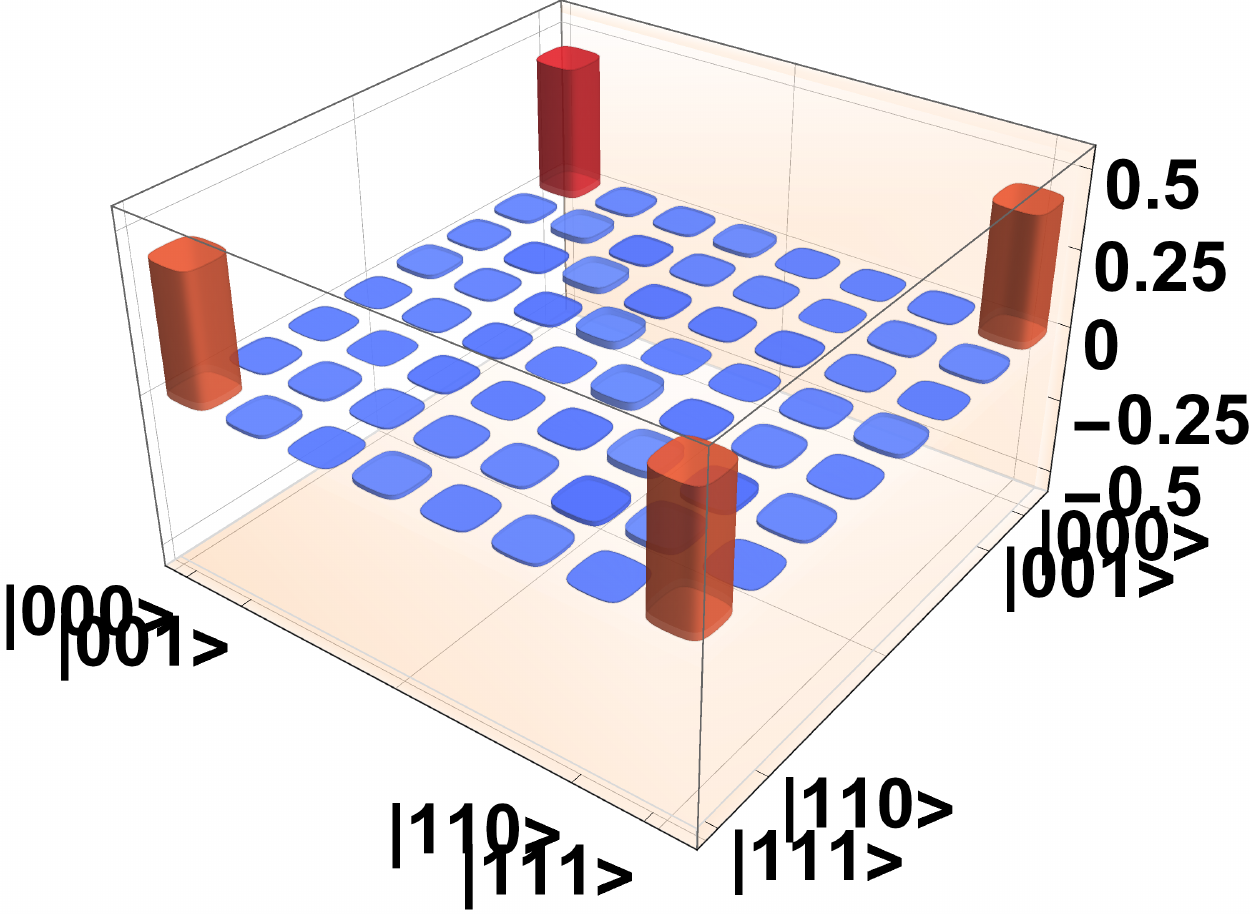}
\includegraphics[width=0.2\textwidth]{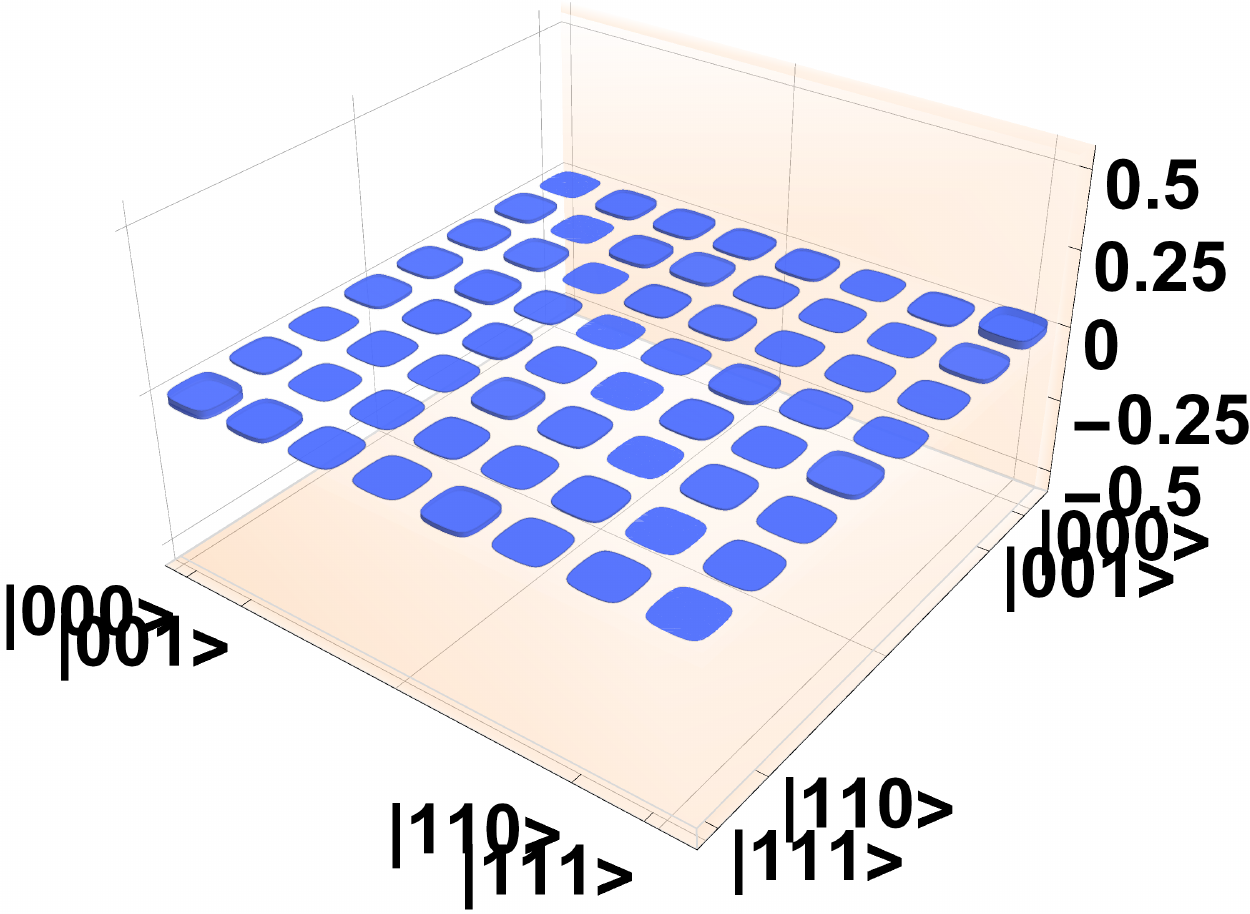}

\includegraphics[width=0.2\textwidth]{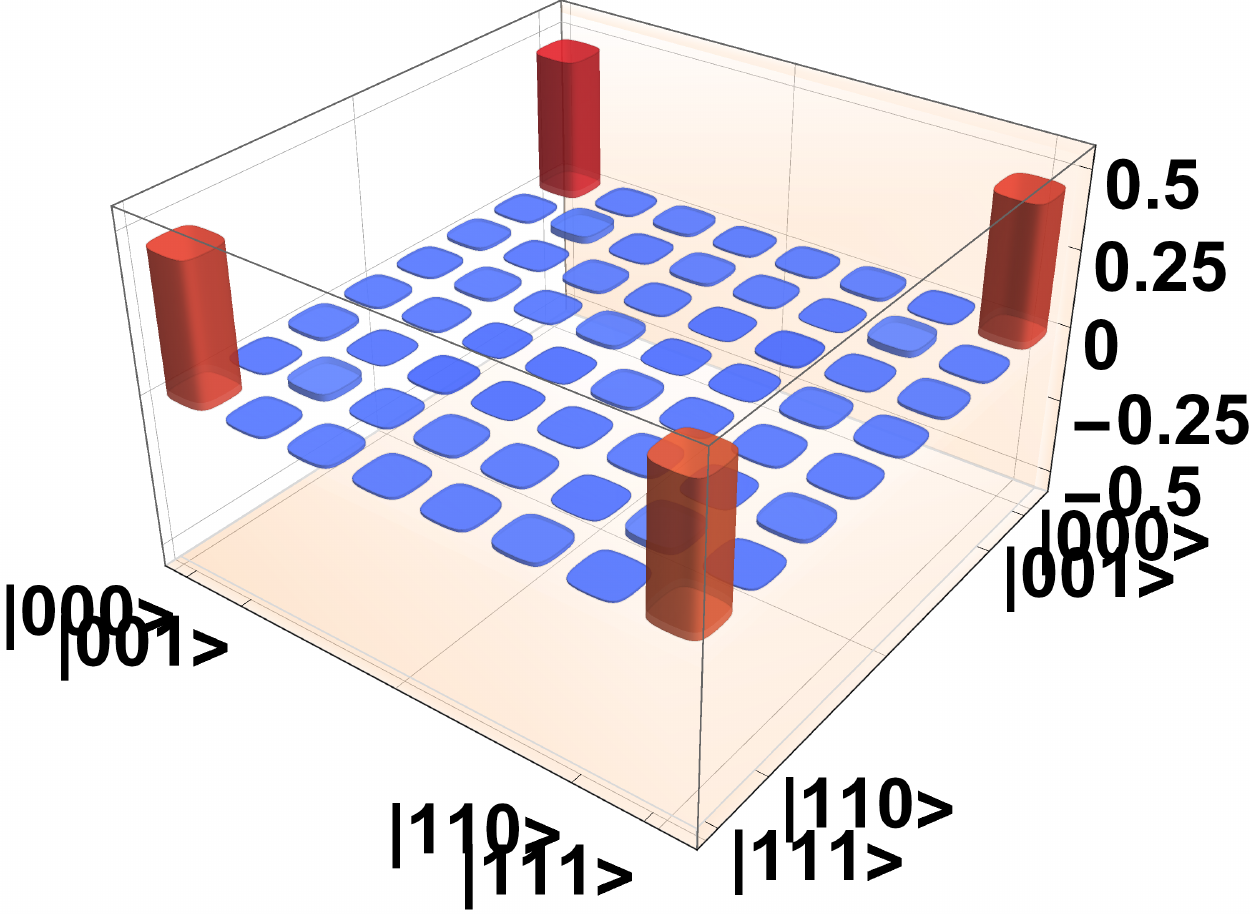}
\includegraphics[width=0.2\textwidth]{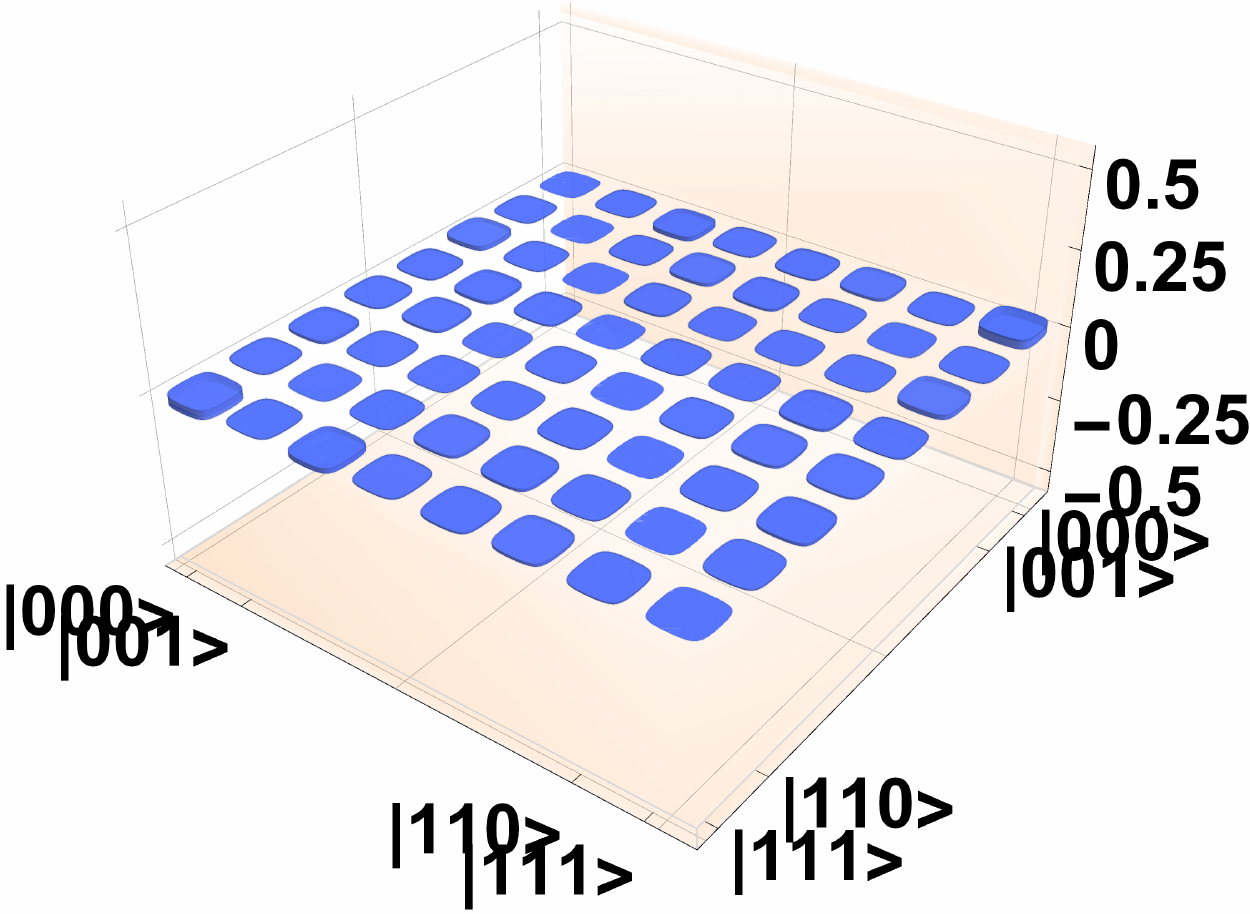}

\caption{The reconstruction of density matrix for GHZ state. The upper two
figures are real and imaginary part of density matrix of state reconstruction
using all $64$ Pauli measurements. The bottom two figures are real and
imaginary part of density matrix of state reconstruction using $31$
optimum Pauli measurements described in Eq.~\ref{eq:3qubit}. The fidelity between
the two density matrices is 0.960.} \label{NMRfig3}
\end{figure}

\begin{figure}[htb]
\includegraphics[width=0.4\textwidth]{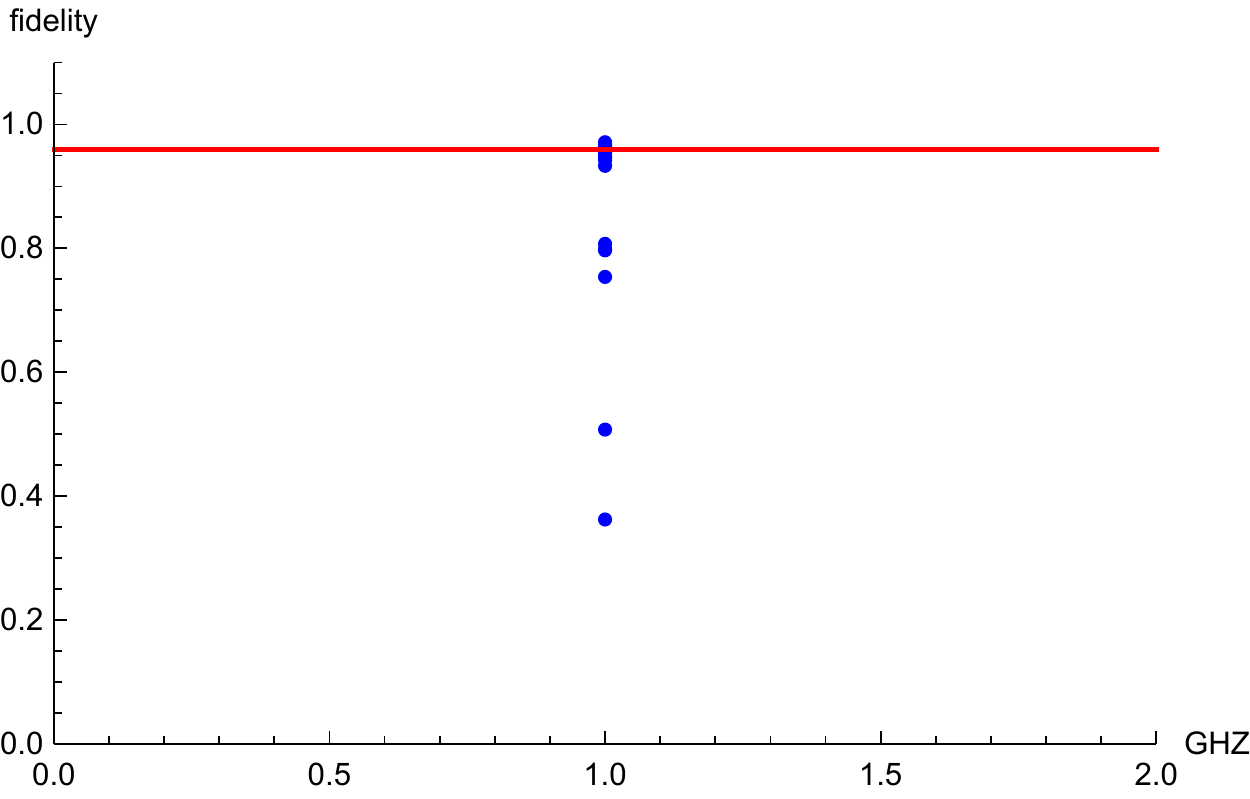}
\caption{Performance of $3$-qubit protocol using selected Pauli
measurements against randomly Pauli measurements. Blue dots represents fidelity between density matrix reconstructed from all $64$ Pauli measurements and density matrix reconstructed from random
$31$ Pauli measurements. The red squre represents fidelity of reconstruction
using our protocol.} \label{NMRfig4}
\end{figure}

\section{Application to tomography in optical systems}

Figure~\ref{opticsmeas} depicts a typical scheme for measuring a polarization-encoded $n$-photon state~\cite{altepeter05,prevedel07,chen07,prevedel11,erven14,hamel14}. Quarter- and half-waveplates in each photon's path are rotated to choose a separable polarization basis. We call the set of angles specifying each waveplate's position the \emph{setting} of the measurement. The $n$-photon state is projected onto the basis set by the waveplate angles with $n$ polarizing beamsplitters. A single-photon detector is present in each of the $2n$ output ports of the beamsplitters, and $n$-fold coincident detections among the $n$ paths are counted. There are $2^n$ combinations of $n$-fold coincident detection events that correspond to a state with one photon entering each of the $n$ beamsplitters before being detected in one of the two output ports. Summing the total number of $n$-fold coincidences over these $2^n$ combinations gives the total number of copies of the state detected by the measurement.

\begin{figure}
\includegraphics{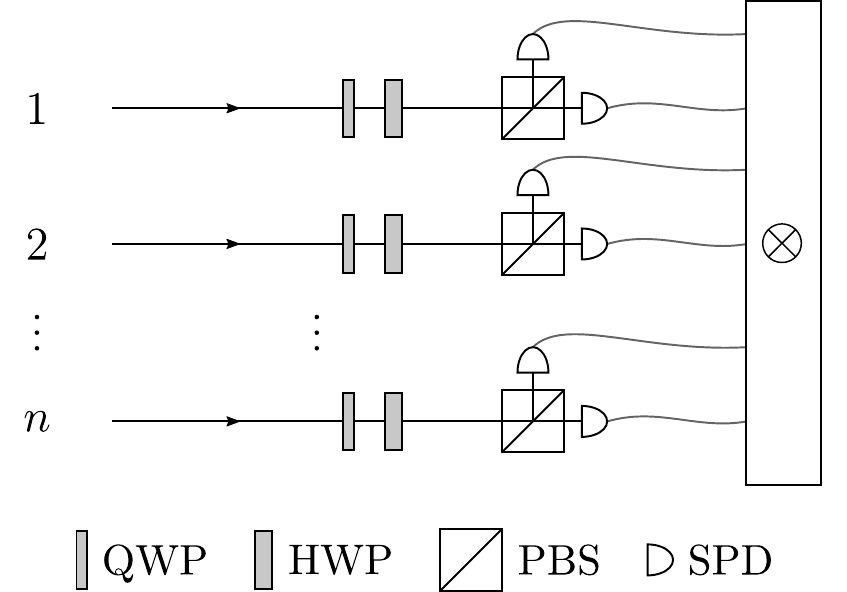}
\caption{Measurement scheme for a polarization-encoded $n$-photon state. The $n$-qubit state is encoded in the polarizations of the $n$ photons. Each photon is measured using a quarter-waveplate (QWP), half-waveplate (HWP) and a polarizing beamplitter (PBS) with a single-photon counting detector (SPD) at each of its output ports. The quarter- and half-waveplates are rotated to choose the measurement basis for each photon. Separable projective measurements are performed by counting coincident detection events between all $n$ photons.} \label{opticsmeas}
\end{figure}

A minimum of $3^n$ measurement settings are required for general state tomography using separable projective measurements~\cite{deBurgh05}. We note that, if one performs nonseparable measurements, then general state tomography can be performed with $2^n+1$ measurement settings~\cite{adamson10}. However, these types of measurements are difficult to perform in practice, so we restrict the discussion here to separable ones.

One can think of each setting as a projective measurement that produces results for multiple Pauli operators simultaneously. For example, consider measuring a 2-photon state with the waveplates set such that a photon in the positive eigenstate of the Pauli $X$ or $Y$ operator will be deterministically transmitted at the first or second beamsplitter, respectively. For simplicity we will call this the $XY$ setting. There are four relevant two-fold coincident detection events, which we denote $N_{tt}$, $N_{tr}$, $N_{rt}$, and $N_{rr}$, and where the first and second subscripts represent which output port (i.e. transmitted or reflected) the first or second photon was detected, respectively. These counts can be summed in specific ways to find expectation values of different Pauli operators. For example the expectation value of $\langle XY\rangle$ is given by $\langle XY\rangle=(N_{tt}-N_{tr}-N_{rt}+N_{rr})/N$, where the total number of copies $N$ is given by $N = N_{tt}+N_{tr}+N_{rt}+N_{rr}$. Similarly, $\langle XI \rangle$ can be found with $\langle XI\rangle=(N_{tt}+N_{tr}-N_{rt}-N_{rr})/N$. In total, the $XY$ setting measures the following four Pauli operators:
\begin{eqnarray*}
XY,XI,IY,II.
\end{eqnarray*}

Based on this observation, we can use the results of Theorem~\ref{th:2qubits} and Theorem~\ref{th:3qubits} to reduce the number of settings to UDA pure states. For the two-qubit case, recall that the $11$ Pauli operators to UDA any pure states are
\begin{eqnarray*}
A=\{II,IX,IY,IZ,XI,YX,YY,YZ,ZX,ZY,ZZ\}.
\end{eqnarray*}

Notice that any of the $6$ Paulis with no $I$ component (the two-qubit correlations) only appear in the setting which measures it. However, looking at the remaining $5$ Paulis, $II$ is included in every setting, $IX$ is included in the $YX$ setting, $IY$ in $YY$, $IZ$ in $YZ$. The only operator which does not appear in the settings of the two-qubit correlations is $XI$, so for the two qubit case, $6+1=7$ settings are required to be sufficient for UDA.

And similar analysis can be done for the three qubit case, with the aid of computer search.
We summarize these results as the corollary below.

\begin{corollary}
\label{cor}
Only $7$ settings
\begin{eqnarray*}
\{XI,YX,YY,YZ,ZX,ZY,ZZ\}.
\end{eqnarray*}
are needed to UDA any two-qubit pure states,
compared with $9$ settings needed for general two-qubit state tomography.
And only $19$ settings
\begin{eqnarray}
\{XXZ, XYZ, XZX, XZY, XZZ, YXX, \nonumber\\YXY, YYX, YYY,
YZX, YZY, YZZ, \nonumber\\ ZXX, ZXY, ZXZ, ZYX, ZYY, ZYZ, ZZX\}
\end{eqnarray}
are needed to UDA any three-qubit pure states,
compared with $27$ settings needed for general three-qubit state tomography.
\end{corollary}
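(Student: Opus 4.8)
The plan is to recast the corollary as a covering problem and then verify sufficiency against the UDA sets already certified in Theorems~\ref{th:2qubits} and~\ref{th:3qubits}. The key observation, established in the text preceding the corollary, is that a single separable setting specified by a tuple $(s_1,\dots,s_n)$ of single-qubit bases $s_i\in\{X,Y,Z\}$ simultaneously yields the expectation values of all $2^n$ Pauli operators $Q_1\cdots Q_n$ with $Q_i\in\{I,s_i\}$. Thus each setting \emph{covers} a family of $2^n$ Paulis obtained by independently replacing each $s_i$ by the identity. Since enlarging the measured set can only shrink $(\mathcal{S}(\mathbf{A}))^{\perp}$, any superset of a UDA set is again UDA; hence it suffices to exhibit a collection of settings whose covered Paulis contain one of the UDA sets of Theorem~\ref{th:2qubits} or Theorem~\ref{th:3qubits}.

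For the two-qubit case I would argue directly. Take the eleven-element UDA set of Theorem~\ref{th:2qubits}. Each of its six weight-two members $YX,YY,YZ,ZX,ZY,ZZ$ is covered only by the setting equal to it, so these six settings are forced. Together they cover, from the UDA set, the identity $II$ and the three single-qubit operators $IX,IY,IZ$ (and incidentally $YI,ZI$ as well); the only UDA element still uncovered is $XI$, which is produced by any setting having $X$ on the first qubit (e.g.\ $XZ$, denoted $XI$ in the corollary). Adjoining one such setting gives the seven settings listed, whose covered Paulis contain the entire UDA set. The verification is a finite check, and the reduction from $3^2=9$ to $7$ follows.

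For the three-qubit case I would set up the analogous search. With the thirty-one-element UDA set of Theorem~\ref{th:3qubits}, each setting now covers $2^3=8$ Paulis, and the task becomes a set-cover problem: minimize the number of settings (drawn from the $27$ candidate tuples) whose covered families jointly contain a UDA-sufficient set. Exploiting the freedom to cover any superset of a UDA set---and, if useful, any Clifford-equivalent UDA set---an exhaustive or integer-programming search returns the nineteen settings displayed in the corollary; confirming that they work is again a mechanical check over the $19\times 8$ covered Paulis, giving the reduction from $3^3=27$ to $19$.

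The main obstacle is the three-qubit search rather than the verification. In two qubits the top-weight Paulis pinned down six of the seven settings and left essentially no freedom, but in three qubits only the weight-three Paulis are covered by a unique setting, while the many UDA elements of weight $\le 2$ can be covered in numerous ways, so the optimal collection is genuinely combinatorial and is located by computer. If the stronger claim that $7$ and $19$ are \emph{minimal} is intended, the harder part becomes the lower bound: one must certify that no smaller collection of settings can cover any UDA-sufficient set, which requires either an exhaustive sweep over sub-collections or a covering/LP-duality argument rather than the existence check above.
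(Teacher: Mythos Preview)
Your proposal is correct and follows essentially the same approach as the paper: both argue that each separable setting covers the $2^n$ Paulis obtained by replacing factors with $I$, then show that the six weight-two operators in the two-qubit UDA set force six settings which already cover $II,IX,IY,IZ$, leaving only $XI$ to be picked up by one additional setting, while the three-qubit case is handled by computer search over the resulting covering problem. Your set-cover phrasing and the remark about Clifford-equivalent UDA sets are slight elaborations, but the substance matches the paper's reasoning.
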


We remark that Corollary $1$ is a direct application of Theorem~\ref{th:2qubits} and Theorem~\ref{th:3qubits}. It is possible for even better results to be obtained by including knowledge of settings in the first optimization. However, proving sufficiency becomes more difficult in these cases.

\section{Conclusion}

In this work, we find the most compact Pauli measurement sets for
pure-state tomography on two- and three-qubit systems. The experiments on
two-qubit and three-qubit NMR systems demonstrated the advantages of using
such protocol. We reduced the required number of measurements by $5$ and $33$ for
two- and three-qubit systems, respectively, without significant drop in fidelity.
As a direct application of this result,
we also showed that our scheme can be used to reduce
the number of settings needed for pure-state tomography in quantum optics systems.

A few questions need to be answered before we scale the test to larger
systems. We are able to find the optimum sets for two and three qubits.
However, the method we used to find those sets can not be easily generalized
to larger systems. It remains open whether one can find a general algorithm to
decide the smallest sets of Pauli operators to UDA any pure state for
a system of $n$ qubits. If such an algorithm exists, we would hope that
the number of measurements required
grows linearly with the Hilbert space dimension of the system.

\section*{Acknowledgement} 
We thank Nengkun Yu for helpful discussions. 
This research was supported in part by the Natural Sciences and Engineering Research Council of Canada (NSERC), Canada Research Chairs, Industry Canada, National Key Basic Research Program (2013CB921800 and 2014CB848700), the National Science Fund for Distinguished Young Scholars (NO. 11425523) and National Natural Science Foundation of China (NO. 11375167). 

\bibliography{Pauli}


\appendix
\label{appendix}
\section{Proof of Theorem~\ref{th:3qubits}}

In order to prove Theorem~\ref{th:3qubits}, it suffices to prove
the following result.

\begin{theorem}
Any Hermitian operator perpendicular to
\begin{eqnarray*}
&&\{IIX,IIY,IIZ,IXI,IXX,IXY,IYI,IYX,IYY, IZI,\nonumber \\
&&XIZ,XXX,XXY,XYX,XYY,XZX,XZY,YXX,\nonumber \\
&&YXY,YXZ,YYX,YYY,YYZ,YZI,ZII,ZXZ,ZYZ,\nonumber \\
&&ZZX,ZZY,ZZZ\}
\end{eqnarray*}
must have at least two positive and two negative eigenvalues.
\end{theorem}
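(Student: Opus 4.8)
The plan is to mirror the two–qubit argument of Theorem~\ref{th:2qubits}, reducing the eigenvalue condition to a single positivity statement about a determinant. First I would fix the parametrization already written down in the main text, writing every $H\in(\mathcal{S}(\mathbf{A}))^{\perp}$ as $H=\sum_{k=1}^{33}\alpha_k P_k$ with real $\alpha_k$ and $P_k$ the $33$ traceless Pauli operators complementary to $\mathbf{A}$. The goal then becomes to show that the real homogeneous degree-$8$ polynomial $\det H=\det H(\alpha_1,\dots,\alpha_{33})$ is strictly positive whenever $(\alpha_1,\dots,\alpha_{33})\neq 0$.

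The reason this single inequality suffices is a parity argument, exactly as in the $4\times 4$ case. Since $III\in\mathbf{A}$, every such $H$ is an $8\times 8$ traceless Hermitian matrix. If $\det H>0$ then $H$ has no zero eigenvalue and the number $n_-$ of negative eigenvalues is even, so $n_-\in\{0,2,4,6,8\}$; tracelessness rules out $n_-=0$ and $n_-=8$, since eigenvalues all of one sign would force $\tr H\neq 0$, leaving $n_-\in\{2,4,6\}$ and correspondingly $n_+=8-n_-\in\{2,4,6\}$. In every surviving case $H$ has at least two positive and at least two negative eigenvalues, which is precisely the claim, with $\det H=0$ occurring only at $H=0$.

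To establish positivity of $\det H$ I would compute the determinant explicitly and exhibit it as a sum of manifestly non-negative terms that vanish simultaneously only at the origin, as was done for the explicit quartic determinant in the proof of Theorem~\ref{th:2qubits}. To keep the computation tractable I would split on the first qubit, writing $H$ as a $2\times 2$ array of $4\times 4$ blocks, $H=\left(\begin{smallmatrix} M_I+M_Z & M_X-iM_Y\\ M_X+iM_Y & M_I-M_Z\end{smallmatrix}\right)$, where $M_I,M_X,M_Y,M_Z$ are the two–qubit operators accompanying $I,X,Y,Z$ on the first qubit, and then use the Schur–complement identity to factor $\det H$ into lower–degree pieces. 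I would also exploit Clifford equivalence to normalize part of the parameter vector and cut down the number of monomials that must be tracked.

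The hard part is certifying positivity of this octic form in $33$ variables: unlike the five–variable quartic, the determinant is far too large to organize by hand, so the decisive step is to obtain a sum–of–squares certificate (almost certainly with computer algebra) and to verify that its vanishing locus is exactly the origin. Conceptually the subtlety is that the clean ``$\det H>0$'' route is stronger than the UDA condition itself, which merely forbids $n_+\le 1$ or $n_-\le 1$ and would still permit configurations such as $(n_+,n_-)=(5,3)$ with $\det H<0$; the argument works only because the particular set $\mathbf{A}$ was chosen so that no nonzero $H$ in its complement has an odd number of negative eigenvalues, equivalently none has a zero eigenvalue away from $H=0$. If the explicit determinant turned out to be only non-negative with a larger vanishing set, I would fall back to controlling the second–largest and second–smallest eigenvalues directly, for instance through the signs of suitable principal minors or an interlacing argument built on the block decomposition above.
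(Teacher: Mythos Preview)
Your primary strategy---proving $\det H>0$ for every nonzero $H\in(\mathcal{S}(\mathbf{A}))^{\perp}$---cannot succeed, because the statement is false. Take
\[
H \;=\; 2\,XII \;+\; XZI \;+\; IZZ \;+\; \epsilon\,XZZ,
\]
all four of which lie in the complement of $\mathbf{A}$ (they appear with coefficients $x_6,x_{13},x_5,x_{14}$ in the paper's parametrization). These four Paulis commute pairwise, and $XZZ=XII\cdot IZZ$, so in the joint eigenbasis of the independent triple $(XII,XZI,IZZ)$ the eigenvalues of $H$ are $2a+b+c+\epsilon\,ac$ with $a,b,c\in\{\pm1\}$. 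For $0<\epsilon<2$ this gives exactly three positive eigenvalues $(4+\epsilon,\,2+\epsilon,\,2-\epsilon)$ and five negative ones $(-\epsilon,\,-\epsilon,\,\epsilon-2,\,-2-\epsilon,\,\epsilon-4)$, so
\[
\det H \;=\; -\,\epsilon^{2}\,(16-\epsilon^{2})\,(4-\epsilon^{2})^{2} \;<\; 0.
\]
Thus $\det H$ is not merely non-positive on a larger vanishing set (your stated fallback scenario) but genuinely changes sign on $(\mathcal{S}(\mathbf{A}))^{\perp}$; no sum-of-squares certificate exists, and the parity argument that worked in the $4\times4$ case simply does not transfer to $8\times8$. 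The signature $(3,5)$ is perfectly consistent with the theorem---it has at least two eigenvalues of each sign---so nothing is wrong with the claim; it is only the determinant route that breaks.

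The paper's proof proceeds along completely different lines. It assumes $H$ has at most one positive eigenvalue and repeatedly invokes Cauchy's interlacing theorem: every principal submatrix of $H$ then also has at most one positive eigenvalue. By choosing $4\times4$ principal submatrices whose determinants are visibly of the form $|c_{i4}c_{j5}-c_{i5}c_{j4}|^{2}\ge 0$, the interlacing constraint forces each such determinant to vanish, yielding a large system of bilinear equations in the matrix entries. These equations force columns $k$ and $9-k$ to be linearly dependent for $k=1,2,3,4$, whence $\operatorname{rank}H\le 4$, and a lengthy case analysis on the dependence coefficients finishes the contradiction. Your one-line mention of ``an interlacing argument'' in the final fallback is the only piece pointing in the right direction, but it needs to be the backbone of the proof rather than an afterthought.
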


\begin{proof}
The proof proceeds as follows. First construct an $8$-by-$8$ traceless Hermitian matrix $H$ which is perpendicular to all the above Pauli operators. This will be a real linear combination of every Pauli operator that is not being measured. This $H$ is then a general description of any Hermitian matrix in the complement of the span of all measured operators. We will show through a case by case analysis that if we assume $H$ only has one positive eigenvalue, then it follows that $H$ must be the zero matrix. A similar argument holds for having only one negative eigenvalue therefore $H$ must have at least two positive and two negative eigenvalues.

Let us begin by constructing $H$ which is a real linear combination of the 33 Pauli operators not being measured (excluding the identity). $H$ is then:

\begin{eqnarray*}
H&=&x_{1} IXZ + x_2 IYZ + x_3 IZX + x_4 IZY \\
&&+ x_5 IZZ + x_6 XII + x_7 XIX + x_8 XIY\\
&&+ x_9 XXI + x_{10} XXZ + x_{11} XYI + x_{12} XYZ \\
&&+ x_{13} XZI + x_{14} XZZ + x_{15} YII + x_{16} YIX\\
&& + x_{17} YIY + x_{18} YIZ + x_{19} YXI + x_{20} YYI \\
&&+ x_{21} YZX + x_{22} YZY + x_{23} YZZ + x_{24} ZIX \\
&&+ x_{25} ZIY + x_{26} ZIZ + x_{27} ZXI + x_{28} ZXX \\
&&+ x_{29} ZXY + x_{30} ZYI + x_{31} ZYX + x_{32} ZYY\\
&& + x_{33} ZZI.
\end{eqnarray*}

Writing $H$ in matrix form will give the form:
\begin{eqnarray}
\begin{bmatrix}
c_{11} & c_{12}& c_{13}& c_{14}&c_{15} &c_{16} &c_{17} & 0 \\
c_{12}^{\ast} & c_{22} & c_{23} & c_{24}& c_{25}& c_{26}&0 &c_{28}  \\
c_{13}^{\ast} & c_{23}^{\ast} & c_{33}& c_{34}& c_{35}& 0& c_{37} & c_{38}  \\
c_{14}^{\ast} & c_{24}^{\ast} & c_{34}^{\ast}& c_{44}&0 & c_{46}& c_{47}& c_{48} \\
c_{15}^{\ast} & c_{25}^{\ast} & c_{35}^{\ast}& 0 & c_{55}& c_{56}& c_{57}& c_{58}  \\
c_{16}^{\ast} & c_{26}^{\ast} &0 & c_{46}^{\ast}& c_{56}^{\ast}& c_{66}& c_{67}& c_{68}  \\
c_{17}^{\ast} & 0 & c_{37}^{\ast}& c_{47}^{\ast}& c_{57}^{\ast}& c_{67}^{\ast}&c_{77} & c_{78} \\
0 & c_{28}^{\ast} &c_{38}^{\ast} & c_{48}^{\ast}& c_{58}^{\ast}& c_{68}^{\ast}&c_{78}^{\ast} &c_{88}
\end{bmatrix}
\end{eqnarray}
where
\begin{eqnarray*}
c_{11}&=& x_{5}+x_{26} + x_{33} ;\\
c_{22}&=& -x_5-x_{26} + x_{33};\\
c_{33}&=& -x_5+x_{26} - x_{33};\\
c_{44}&=&x_5-x_{26} - x_{33} ;\\
c_{55}&=&x_5-x_{26} - x_{33}=c_{44};\\
c_{66}&=& -x_5+x_{26} - x_{33}=c_{33};\\
c_{77}&=& -x_5-x_{26} + x_{33}=c_{22};\\
c_{88}&=& x_{5}+x_{26} + x_{33}=c_{11};\\
c_{12}&=& x_3+x_{24} - i(x_4+ x_{25});\\
c_{34}&=&-x_3+x_{24}+i(x_4-x_{25});\\
c_{56}&=&x_3-x_{24}-i(x_4-x_{25})=-c_{34};\\
c_{78}&=&-x_3-x_{24}+i(x_4+x_{25})=-c_{12};\\
c_{13}&=&x_1+x_{27} - i(x_2+ x_{30});\\
c_{24}&=&-x_1 +x_{27}+ i(x_2-x_{30});\\
c_{57}&=&x_1-x_{27}-i(x_2-x_{30})=-c_{24};\\
c_{68}&=&-x_1-x_{27}+i(x_2+x_{30})=-c_{13};\\
c_{14}&=&x_{28}-x_{32} - i( x_{29}+x_{31});\\
c_{23}&=&x_{28}+x_{32} +i(x_{29}-x_{31});\\
c_{58}&=&-x_{28}+x_{32}+i(x_{29}+x_{31})=-c_{14};\\
c_{67}&=&-x_{28}-x_{32}-i(x_{29}-x_{31})=-c_{23};\\
c_{15}&=&x_6+x_{13} + x_{14} - i( x_{15} + x_{18} + x_{23});  \\
c_{26}&=&x_6+x_{13}-x_{14}-i(x_{15}-x_{18}-x_{23});\\
c_{37}&=&x_6-x_{13}-x_{14}-i(x_{15}+x_{18}-x_{23});\\
c_{48}&=&x_6-x_{13}+x_{14}-i(x_{15}-x_{18}+x_{23})\\
&&=c_{15}-c_{26}^{\ast}+c_{37}^{\ast};\\
c_{16}&=&x_7- x_{17}  - x_{22}  - i( x_8+  x_{16}+x_{21}) ;\\
c_{25}&=&x_7+x_{17}+x_{22}+i(x_8-x_{16}-x_{21});\\
c_{38}&=&x_7-x_{17}+x_{22}-i(x_8+x_{16}-x_{21});\\
c_{47}&=&x_7+x_{17}-x_{22}+i(x_8-x_{16}+x_{21})\\
&&=c_{16}^{\ast}+c_{25}-c_{38}^{\ast};\\
\end{eqnarray*}
\begin{eqnarray*}
c_{17}&=& x_9+x_{10} - x_{20}-i( x_{11}+ x_{12}+ x_{19});\\
c_{28}&=&x_9-x_{10}-x_{20}-i(x_{11}-x_{12}+x_{19});\\
c_{35}&=&x_9+x_{10}+x_{20}+i(x_{11}+x_{12}-x_{19});\\
c_{46}&=&x_9-x_{10}+x_{20}+i(x_{11}-x_{12}-x_{19})\\
&&=c_{28}^{\ast}+c_{35}-c_{17}^{\ast};\\
\end{eqnarray*}

Note that the main anti-diagonal is all zeros. This was by design, since any set of Pauli operators Clifford equivalent to the result from the hyper-graph dualization program is also a solution, we had the freedom to choose a set which would make the proof simpler. Choosing the set of operators which contained all Pauli operators constructed by tensoring only $X$ operators and $Y$ operators meant $H$ would have zero main anti-diagonal. The only reason for choosing this set is it makes this proof a little simpler.

Here we assume $H$ is a Hermitian matrix with only one positive eigenvalue. We first show all diagonal entries of $H$ must be zero. Observe that $c_{55}=c_{44}$, $c_{66}=c_{33}$, $c_{77}=c_{22}$, $c_{88}=c_{11}$. In order for the traceless condition on $H$ to hold, it is then clear that $c_{11}+c_{22}+c_{33}+c_{44}=0$. If $H$ has some nonzero diagonal entry, then at least one of $c_{11}, c_{22}, c_{33}$ and $c_{44}$ will be positive. Without loss of generality, let $c_{11}>0$, then the submatrix of $H$ formed by the rows $(1, 8)$ and columns $(1,8)$, which will be of the form $c_{11}*I$, will have two positive eigenvalues.
\begin{lemma}
Cauchy's Interlacing Theorem states\cite{horn2012matrix}:\\
Let:
\begin{eqnarray*}
A=
\begin{bmatrix}
B & C\\
C^\dagger & D
\end{bmatrix}
\end{eqnarray*}
be an n-by-n Hermitian matrix, where $B$ has size m-by-m (m$<$n). If the eigenvalues of $A$ and $B$ are $\alpha_1\leq \ldots \leq \alpha_n$ and $\beta_1 \leq \ldots \leq \beta_m$ respectfully. Then:
\begin{eqnarray*}
\alpha_k \leq \beta_k \leq \alpha_{k+n-m}, k=1,\ldots ,m.
\end{eqnarray*}
\end{lemma}
It follows from Cauchy's interlacing property that if a principle submatrix of $H$ has 2 positive eigenvalues then $H$ also has at least two positive eigenvalues.

Hence,  $H$ must be in the following form:

\begin{eqnarray}\label{eq:form}
\resizebox{\linewidth}{!}{$
H=\begin{bmatrix}
0 & c_{12}& c_{13}& c_{14}&c_{15} &c_{16} &c_{17} & 0 \\
c_{12}^{\ast} & 0 & c_{23} & c_{24}& c_{25}& c_{26}&0 &c_{28}  \\
c_{13}^{\ast} & c_{23}^{\ast} & 0& c_{34}& c_{35}& 0& c_{37} & c_{38}  \\
c_{14}^{\ast} & c_{24}^{\ast} & c_{34}^{\ast}& 0&0 & c_{28}^{\ast}+c_{35}-c_{17}^{\ast}& c_{16}^{\ast}+c_{25}-c_{38}^{\ast}& c_{15}-c_{26}^{\ast}+c_{37}^{\ast} \\
c_{15}^{\ast} & c_{25}^{\ast} & c_{35}^{\ast}& 0 & 0& -c_{34}& -c_{24}& -c_{14}  \\
c_{16}^{\ast} & c_{26}^{\ast} &0 &c_{28}+c_{35}^{\ast}-c_{17}& -c_{34}^{\ast}& 0& -c_{23}& -c_{13}  \\
c_{17}^{\ast} & 0 & c_{37}^{\ast}& c_{16}+c_{25}^{\ast}-c_{38}& -c_{24}^{\ast}& -c_{23}^{\ast}&0 & -c_{12} \\
0 & c_{28}^{\ast} &c_{38}^{\ast} & c_{15}^{\ast}-c_{26}+c_{37}& -c_{14}^{\ast}& -c_{13}^{\ast}&-c_{12}^{\ast} &0
\end{bmatrix}.
$}\nonumber
\end{eqnarray}

In fact, under the assumption that $H$ has only $1$ positive eigenvalue, it follows from Cauchy's interlacing theorem that any principle submatrix of $H$ cannot have more than one positive eigenvalue. Otherwise, we will have a contradiction.

Let us look at the submatrix formed by rows $1,2,4,5$ and the same columns. It is a traceless Hermitian matrix with determinant $\vert c_{14}c_{25}-c_{15}c_{24}\vert^2$. Again, if the submatrix has positive determinant, then it must have exactly two positive eigenvalues. Once again by applying Cauchy's interlacing property, $H$ will have at least two positive eigenvalues. This immediately contradictions our assumption. The above argument  implies that, under our assumption $H$ has only $1$ positive eigenvalue, we have $\vert c_{14}c_{25}-c_{15}c_{24}\vert^2\leq 0$. It is not surprising that the inequality holds if and only if the equality holds. Then we have $c_{14}c_{25}-c_{15}c_{24}=0$.

Similarly, by considering other $4$-by-$4$ submatrices constructed from the rows and columns $a,b,4,5$ where $a,b$ are any two of the remain six rows, we can show that:
\begin{eqnarray*}
c_{14}c_{35}-c_{15}c_{34}&=&0;\\
-c_{14}c_{34}^{\ast}-c_{15}(c_{28}+c_{35}^{\ast}-c_{17})&=&0;\\
-c_{14}c_{24}^{\ast}-c_{15}(c_{16}+c_{25}^{\ast}-c_{38})&=&0;\\
-c_{14}c_{14}^{\ast}-c_{15}(c_{15}^{\ast}-c_{26}+c_{37})&=&0;\\
c_{24}c_{35}-c_{25}c_{34}&=&0;\\
-c_{24}c_{34}^{\ast}-c_{25}(c_{28}+c_{35}^{\ast}-c_{17})&=&0;\\
-c_{24}c_{24}^{\ast}-c_{25}(c_{16}+c_{25}^{\ast}-c_{38})&=&0;\\
-c_{24}c_{14}^{\ast}-c_{25}(c_{15}^{\ast}-c_{26}+c_{37})&=&0;\\
-c_{34}c_{34}^{\ast}-c_{35}(c_{28}+c_{35}^{\ast}-c_{17})&=&0;\\
-c_{34}c_{24}^{\ast}-c_{35}(c_{16}+c_{25}^{\ast}-c_{38})&=&0;\\
-c_{34}c_{14}^{\ast}-c_{35}(c_{15}^{\ast}-c_{26}+c_{37})&=&0;\\
-c_{24}^{\ast}(c_{28}+c_{35}^{\ast}-c_{17})+c_{34}^{\ast}(c_{16}+c_{25}^{\ast}-c_{38})&=&0;\\
-c_{14}^{\ast}(c_{28}+c_{35}^{\ast}-c_{17})+c_{34}^{\ast}(c_{15}^{\ast}-c_{26}+c_{37})&=&0;\\
-c_{14}^{\ast}(c_{16}+c_{25}^{\ast}-c_{38})+c_{24}^{\ast}(c_{15}^{\ast}-c_{26}+c_{37})&=&0.
\end{eqnarray*}

The above equations will imply that the $8$-by-$2$ submatrix formed by the $4$-th and $5$-th columns has rank at most $1$.

The same argument can be used to prove that the $8$-by-$2$ submatrices formed by columns $(1,8)$, $(2,7)$ or $(3,6)$ also have rank at most $1$.

As a straightforward consequence, $H$ has rank no more than $4$.

In other words, the $k$-th column and the $(9-k)$-th column are linearly dependant. This means that there exist $\lambda_1,\lambda_2,\lambda_3,\lambda_4$ such that the following equations hold:

\begin{eqnarray}\label{eq:lambda}
&\lambda_1\overrightarrow{C_1}+(1-\lambda_1)\overrightarrow{C_8}=\lambda_2\overrightarrow{C_2}+(1-\lambda_2)\overrightarrow{C_7}=0\\
&\lambda_3\overrightarrow{C_3}+(1-\lambda_3)\overrightarrow{C_6}=\lambda_4\overrightarrow{C_4}+(1-\lambda_4)\overrightarrow{C_5}=0
\end{eqnarray}
Here we have used $\overrightarrow{C_k}$ to represent the $k$-th column of the matrix (\ref{eq:form}).

Let us start with a special case.
Let $\lambda_1=0$. Then $c_{12}=c_{13}=c_{14}=c_{28}=c_{38}=0$ and $c_{15}=c_{26}^{\ast}-c_{37}^{\ast}$. $H$ can be simplified as the following:

\begin{eqnarray*}
\resizebox{\linewidth}{!}{$
H=\begin{bmatrix}
0 & 0& 0& 0&c_{26}^{\ast}-c_{37}^{\ast} &c_{16} &c_{17} & 0 \\
0 & 0 & c_{23} & c_{24}& c_{25}& c_{26}&0 &0  \\
0 & c_{23}^{\ast} & 0& c_{34}& c_{35}& 0& c_{37} & 0  \\
0 & c_{24}^{\ast} & c_{34}^{\ast}& 0&0 & c_{35}-c_{17}^{\ast}& c_{16}^{\ast}+c_{25}& 0 \\
c_{26}-c_{37} & c_{25}^{\ast} & c_{35}^{\ast}& 0 & 0& -c_{34}& -c_{24}&0  \\
c_{16}^{\ast} & c_{26}^{\ast} &0 &c_{35}^{\ast}-c_{17}& -c_{34}^{\ast}& 0& -c_{23}& 0  \\
c_{17}^{\ast} & 0 & c_{37}^{\ast}& c_{16}+c_{25}^{\ast}& -c_{24}^{\ast}& -c_{23}^{\ast}&0 & 0 \\
0 & 0 &0 & 0&0& 0&0 &0
\end{bmatrix}.
$}
\end{eqnarray*}

If we set $c_{23}=c_{24}=c_{34}=0$, then the top-left $4$-by-$4$ submatrix is zero. In this case, the characteristic polynomial of $H$ contains only even powers. Thus $H$ has only one positive eigenvalue implies $H$ has only one negative eigenvalue too. As a consequence, the top-right $4$-by-$4$ submatrix of $H$ has rank exactly $1$.

As a result, any $2$-by-$2$ submatrix of the top-right submatrix must have determinant zero. From suitable choices of submatrices we can obtain the following equations:
\begin{eqnarray}
c_{26}c_{37} & =&0;\\
c_{26}(c_{26}^{\ast}-c_{37}^{\ast})& =&c_{16}c_{25};\\
c_{37}(c_{26}^{\ast}-c_{37}^{\ast})& =&c_{17}c_{35};\\
c_{16}(c_{16}^{\ast}+c_{25})+c_{17}(c_{17}^{\ast}-c_{35}) & =&0.
\end{eqnarray}

 Using the above equations we can obtain:
\begin{eqnarray}
0&=&c_{16}(c_{16}^{\ast}+c_{25})+c_{17}(c_{17}^{\ast}-c_{35})\nonumber \\
&=&c_{16}c_{25}-c_{17}c_{35}+\vert c_{17}\vert^2+\vert c_{16}\vert^2\nonumber \\
&=&c_{26}(c_{26}^{\ast}-c_{37}^{\ast})-c_{37}(c_{26}^{\ast}-c_{37}^{\ast})+\vert c_{17}\vert^2+\vert c_{16}\vert^2\nonumber \\
&=&\vert c_{26}-c_{37}\vert^2+\vert c_{17}\vert^2+\vert c_{16}\vert^2
\end{eqnarray}
This implies $c_{16}=c_{17}=0$ and $c_{26}=c_{37}$. Also since $c_{26}c_{37}=0$ we know that $c_{26}=c_{37}=0$. Furthermore $c_{25}(c_{16}^{\ast}+c_{25})=0$ and $c_{35}(c_{35}-c_{17}^{\ast})=0$ will guarantee $c_{25}=c_{35}=0$. Therefore $H$ is once again the zero matrix.

We must then assume at least one of $c_{23}, c_{24}, c_{34}$ must be nonzero. If $c_{23}\neq 0$, then by considering submatrices formed by rows/columns $(1,2,3,k)$ $(5\leq k\leq 8)$, we have $c_{16}=c_{17}=0$ and $c_{26}=c_{37}$. For the case that $c_{24}=0$ or $c_{34}=0$, we will also have $c_{16}=c_{17}=0$ and $c_{26}=c_{37}$ by considering appropriately chosen submatrices.

We are then left with $H$ in the form:
\begin{eqnarray*}
\resizebox{0.8\linewidth}{!}{$
H=\begin{bmatrix}
0 & 0& 0& 0& 0 &0 &0 & 0 \\
0 & 0 & c_{23} & c_{24}& c_{25}& c_{26}&0 &0  \\
0 & c_{23}^{\ast} & 0& c_{34}& c_{35}& 0& c_{26} & 0  \\
0 & c_{24}^{\ast} & c_{34}^{\ast}& 0&0 & c_{35}& c_{25}& 0 \\
0 & c_{25}^{\ast} & c_{35}^{\ast}& 0 & 0& -c_{34}& -c_{24}&0  \\
0 & c_{26}^{\ast} &0 &c_{35}^{\ast}& -c_{34}^{\ast}& 0& -c_{23}& 0  \\
0 & 0 & c_{26}^{\ast}& c_{25}^{\ast}& -c_{24}^{\ast}& -c_{23}^{\ast}&0 & 0 \\
0 & 0 &0 & 0&0& 0&0 &0
\end{bmatrix}.$}
\end{eqnarray*}

Now, recall the fact that the submatrices formed by the $k$-th and the $(9-k)$-th columns will always have rank $1$. From this it can be shown we will have $H$ is a zero matrix.

Take the submatrix formed by the second and seventh columns for example. Since they are linearly dependant, the determinant of any $2$-by-$2$ submatrix must be zero. From this we can get that $\vert c_{23}\vert^2+\vert c_{26}\vert^2=0$. Therefore $c_{23}=c_{26}=0$. By similar arguments on various submatrices, $H$ can be shown to be the zero matrix.

Thus, under our assumption that $H$ has exactly one positive eigenvalue, $\lambda_1\neq 0$. Similarly, we can also prove that $\lambda_1\neq 1, \lambda_2, \lambda_3, \lambda_4 \neq 0, 1$. We can then assume from now on that $H$ has no zero columns or rows.

Hence, there exists certain $\lambda_1, \lambda_2, \lambda_3$ and $\lambda_4 \neq 0,1$ which satisfies equation \ref{eq:lambda}.

Let us use $\Re $ and $\Im$ to denote the real part and imaginary part of a complex number. Then the above equations can be rewritten as linear equations of real numbers.

Let us use $M(\lambda_1,\lambda_2,\lambda_3,\lambda_4)$ to denote the $48$-by-$30$ coefficient matrix. If we can prove that the coefficient matrix always has rank $30$ for any $\lambda_1, \lambda_2, \lambda_3$ and $\lambda_4$, then it will imply that all $c_{ij}$'s are zeros which will  immediately contradict our assumption.

Unfortunately, we are not that lucky. $M(\lambda_1,\lambda_2,\lambda_3,\lambda_4)$ will be degenerate under certain assignment of variables $(\lambda_1,\lambda_2,\lambda_3,\lambda_4)$. For example, $rank(M(\frac{1+i}{2},\frac{1+i}{2},\frac{1+i}{2},\frac{1+i}{2}))=27<30$. However, we can still show that $M(\lambda_1,\lambda_2,\lambda_3,\lambda_4)$ will have rank $30$ except for some degenerate cases which will be dealt with separately .
The top-left $2$-by-$2$ submatrix has rank $2$ if and only if $\lambda_1\neq 0$

At least one of the following situations must happen:
\begin{enumerate}
\item [1.] $\begin{bmatrix}-C_1 & A_1\\ B_2 & C_2\end{bmatrix}$ has full rank. This implies $c_{12}=c_{17}=0$.
\item [2.] $\begin{bmatrix} A_1 & C_1\\-D_2 & A_2\end{bmatrix}$ has full rank. This implies $c_{12}=c_{28}=0$.
\item [3.] $\begin{bmatrix} B_3 & C_3\\ -C_1 & A_1\end{bmatrix}$ has full rank. This implies $c_{13}=c_{16}=0$.
\item [4.] $\begin{bmatrix}A_1 & C_1\\ -D_3 & A_3\end{bmatrix}$ has full rank. This implies $c_{13}=c_{38}=0$.
\item [5.] $\begin{bmatrix}-C_1 & A_1\\B_4 & C_4\end{bmatrix}$ has full rank. This implies $c_{14}=c_{15}=0$.
\item [6.] $\begin{bmatrix}-C_2 & A_2\\B_3 & C_3\end{bmatrix}$ has full rank. This implies $c_{23}=c_{26}=0$.
\item [7.] $\begin{bmatrix} A_2 & C_2\\ -D_3 & A_3\end{bmatrix}$ has full rank. This implies $c_{23}=c_{37}=0$.
\item [8.] $\begin{bmatrix}-C_2 & A_2\\B_4 & C_4\end{bmatrix}$ has full rank. This implies $c_{24}=c_{25}=0$.
\item [9.] $\begin{bmatrix}-C_3 & A_3\\B_4 & C_4\end{bmatrix}$ has full rank. This implies $c_{34}=c_{35}=0$.
\item [10.]
\begin{eqnarray*}
&&\det\left(\begin{bmatrix}-C_1 & A_1\\ B_2 & C_2\end{bmatrix}\right)
=\det\left(\begin{bmatrix} A_1 & C_1\\-D_2 & A_2\end{bmatrix}\right)\\
&=&\det\left(\begin{bmatrix} B_3 & C_3\\ -C_1 & A_1\end{bmatrix}\right)
=\det\left(\begin{bmatrix}A_1 & C_1\\ -D_3 & A_3\end{bmatrix}\right)\\
&=&\det\left(\begin{bmatrix}-C_1 & A_1\\B_4 & C_4\end{bmatrix}\right)
=\det\left(\begin{bmatrix}-C_2 & A_2\\B_3 & C_3\end{bmatrix}\right)\\
&=&\det\left(\begin{bmatrix} A_2 & C_2\\ -D_3 & A_3\end{bmatrix}\right)
=\det\left(\begin{bmatrix}-C_2 & A_2\\B_4 & C_4\end{bmatrix}\right)\\
&=&\det\left(\begin{bmatrix}-C_3 & A_3\\B_4 & C_4\end{bmatrix}\right)=0.
\end{eqnarray*}

With assistance of symbolic computation package like Mathematica, we find that the only solution to the above equations is $\Re \lambda_1=\Re \lambda_2=\Re \lambda_3=\Re \lambda_4=\frac{1}{2}$.
\end{enumerate}

Here we will prove that there is no Hermitian matrix in the form (~\ref{eq:form}) with only one positive eigenvalue for every situations:
\begin{enumerate}
\item [1.] $c_{12}=c_{17}=0$. Any $H$ with only one positive eigenvalue must be in the following form:

\begin{eqnarray*}
\resizebox{\linewidth}{!}{$
H=\begin{bmatrix}
0 & 0 & c_{13}& c_{14}&c_{15} &c_{16} &0 & 0 \\
0 & 0 & c_{23} & c_{24}& c_{25}& c_{26}&0 &c_{28}  \\
c_{13}^{\ast} & c_{23}^{\ast} & 0& c_{34}& c_{35}& 0& c_{37} & c_{38}  \\
c_{14}^{\ast} & c_{24}^{\ast} & c_{34}^{\ast}& 0&0 & c_{28}^{\ast}+c_{35}& c_{16}^{\ast}+c_{25}-c_{38}^{\ast}& c_{15}-c_{26}^{\ast}+c_{37}^{\ast} \\
c_{15}^{\ast} & c_{25}^{\ast} & c_{35}^{\ast}& 0 & 0& -c_{34}& -c_{24}& -c_{14}  \\
c_{16}^{\ast} & c_{26}^{\ast} &0 &c_{28}+c_{35}^{\ast}& -c_{34}^{\ast}& 0& -c_{23}& -c_{13}  \\
0 & 0 & c_{37}^{\ast}& c_{16}+c_{25}^{\ast}-c_{38}& -c_{24}^{\ast}& -c_{23}^{\ast}&0 &0 \\
0 & c_{28}^{\ast} &c_{38}^{\ast} & c_{15}^{\ast}-c_{26}+c_{37}& -c_{14}^{\ast}& -c_{13}^{\ast}&0 &0
\end{bmatrix}.
$}
\end{eqnarray*}

By considering submatrices formed by row/columns $(1,2,p,q)$ where $3\leq p<q\leq 8$, we have that the first two rows are linearly dependent. Under our assumption that there is no row of $H$ containg only zero entries, we have $c_{28}=0$.

Recall that the $4$-th and $5$-th rows are linearly dependent, thus $c_{34}(-c_{34}^{\ast})=c_{35}(c_{28}+c_{35}^{\ast})$ which now can be simplified as $\vert c_{34}\vert^2+\vert c_{35}\vert^2=0$. Hence $c_{34}=c_{35}=0$. Then

\begin{eqnarray*}
\resizebox{\linewidth}{!}{$
H=\begin{bmatrix}
0 & 0 & c_{13}& c_{14}&c_{15} &c_{16} &0 & 0 \\
0 & 0 & c_{23} & c_{24}& c_{25}& c_{26}&0 &0  \\
c_{13}^{\ast} & c_{23}^{\ast} & 0& 0& 0& 0& c_{37} & c_{38}  \\
c_{14}^{\ast} & c_{24}^{\ast} & 0& 0&0 & 0& c_{16}^{\ast}+c_{25}-c_{38}^{\ast}& c_{15}-c_{26}^{\ast}+c_{37}^{\ast} \\
c_{15}^{\ast} & c_{25}^{\ast} & 0& 0 & 0& 0& -c_{24}& -c_{14}  \\
c_{16}^{\ast} & c_{26}^{\ast} &0 &0&0& 0& -c_{23}& -c_{13}  \\
0 & 0 & c_{37}^{\ast}& c_{16}+c_{25}^{\ast}-c_{38}& -c_{24}^{\ast}& -c_{23}^{\ast}&0 &0 \\
0 &0 &c_{38}^{\ast} & c_{15}^{\ast}-c_{26}+c_{37}& -c_{14}^{\ast}& -c_{13}^{\ast}&0 &0
\end{bmatrix}.
$}
\end{eqnarray*}

Again, by applying our submatrix argument, we have the submatrix formed by $(3,4,5,6)$ columns must has rank $1$.

If there is a zero element in the submatrix formed by rows $(1,2,7,8)$ and columns $(3,4,5,6)$, then there must be a row or a column containing only zero elements in $H$. So, here we assume the submatrix formed by rows $(1,2,7,8)$ and columns $(3,4,5,6)$ does not contain any zero element.

Then $\frac{c_{15}}{c_{25}}=\frac{c_{13}}{c_{23}}=\frac{c_{38}}{c_{37}}$ which implies $c_{38}c_{25}=c_{37}c_{15}$.

Follows from the rank $1$ condition, we have
\begin{eqnarray*}
c_{15}(c_{15}^{\ast}-c_{26}+c_{37})&=&-\vert c_{14}\vert^2,\\
c_{25}(c_{16}+c_{25}^{\ast}-c_{38})&=&-\vert c_{24}\vert^2.
\end{eqnarray*}

By substituting $c_{38}c_{25}=c_{37}c_{15}$ and $c_{15}c_{26}=c_{25}c_{16}$ into the above two equations, we have
\begin{eqnarray*}
\vert c_{15}\vert^2+\vert c_{14}\vert^2 &=& c_{15}c_{26}-c_{15}c_{37}\\
&=& c_{25}c_{16}-c_{25}c_{38}\\
&=& -\vert c_{24}\vert^2-\vert c_{25}\vert^2
\end{eqnarray*}
which implies $c_{15}=c_{14}=c_{24}=c_{25}=0$. However, it contradicts our assumption that there is no zero element in the submatrix formed by $(1,2,7,8)$ rows and $(3,4,5,6)$ columns.

Similarly, we can also prove that there is no Hermtian matrix in the form ~\ref{eq:form} with only one positive eigenvalue if any of the following conditions apply.
\item [2.] $c_{12}=c_{28}=0$.
\item [3.] $c_{13}=c_{16}=0$.
\item [4.] $c_{13}=c_{38}=0$.
\item [5.] $c_{14}=c_{15}=0$.
\item [6.] $c_{23}=c_{26}=0$.
\item [7.] $c_{23}=c_{37}=0$.
\item [8.] $c_{24}=c_{25}=0$.
\item [9.] $c_{34}=c_{35}=0$.

Now, the only case we left is the following:
\item [10.]  $\Re \lambda_1=\Re \lambda_2=\Re \lambda_3=\Re \lambda_4=\frac{1}{2}$. In this case, $rank(\begin{bmatrix}-C_1 & A_1\\ B_2 & C_2\end{bmatrix})=3$. Hence $(\Re c_{12}, \Im c_{12}, \Re c_{17}, \Im c_{17})$ lies in the nullspace of $\begin{bmatrix}-C_1 & A_1\\ B_2 & C_2\end{bmatrix}=\begin{bmatrix}
-\frac{1}{2} & -b_1 & \frac{1}{2} & b_1\\
b_1 & -\frac{1}{2} & b_1 & -\frac{1}{2}\\
\frac{1}{2} & -b_2 & \frac{1}{2} & b_2 \\
b_2 & \frac{1}{2} & -b_2 & \frac{1}{2}
\end{bmatrix}$.  Thus 
\begin{eqnarray*}
&&[{c_{12}}:{c_{17}}]\\
&=&[{2(b_2-b_1)+(1+4b_1b_2)\emph{i}}:{2(b_1+b_2)+(4b_1b_2-1)\emph{i}}].
\end{eqnarray*}

Similarly, we will have
\begin{eqnarray}\label{eq:ratio}
&&\lbrack{c_{12}}:{c_{17}}:{c_{28}}\rbrack\nonumber\\
&=& \lbrack{2(b_2-b_1)+(1+4b_1b_2)\emph{i}}:\nonumber\\
&&{2(b_1+b_2)+(4b_1b_2-1)\emph{i}}: {-2(b_1+b_2)-(4b_1b_2-1)\emph{i}}\rbrack;\nonumber\\
&&\lbrack{c_{13}}:{c_{16}}:{c_{38}}\rbrack\nonumber\\
&=& \lbrack{2(b_1-b_3)-(1+4b_1b_3)\emph{i}}:\nonumber\\
&&{-2(b_1+b_3)-(4b_1b_3-1)\emph{i}}:{2(b_1+b_3)+(4b_1b_3-1)\emph{i}}\rbrack;\nonumber\\
&&\lbrack c_{23}:c_{26}:c_{37}\rbrack\nonumber\\
&=& \lbrack2(b_3-b_2)+(4b_2b_3+1)\emph{i}: \nonumber\\
&&2(b_2+b_3)+(4b_2b_3-1)\emph{i} :-2(b_2+b_3)-(4b_2b_3-1)\emph{i}\rbrack;\nonumber\\
&&\lbrack c_{14}:c_{15} \rbrack\nonumber\\
&=&\lbrack2(b_4-b_1)-(4b_1b_4+1)\emph{i}:\nonumber\\
&&2(b_1+b_4)+(4b_1b_4-1)\emph{i}\rbrack;\nonumber\\
&&\lbrack c_{24}:c_{25} \rbrack \nonumber\\
&=& \lbrack 2(b_4-b_2)+(4b_2b_4+1)\emph{i}:\nonumber\\
&& 2(b_2+b_4)+(4b_2b_4-1)\emph{i}\rbrack;\nonumber\\
&&\lbrack c_{34}:c_{35}\rbrack \nonumber\\
&=& \lbrack 2(b_4-b_3)+(4b_3b_4+1)\emph{i}:\nonumber\\
&&2(b_3+b_4)+(4b_3b_4-1)\emph{i}\rbrack. \nonumber
\end{eqnarray}

Here $\lbrack \ q_1:q_2:\cdots : q_m \rbrack =\lbrack r_1+s_1\emph{i} :r_2+s_2 \emph{i}:\cdots : r_m+s_m\emph{i} \rbrack$ means there exists some $\mu \in \mathbb{R}$ such that $q_i=\mu (r_i+s_i\emph{i})$ for any $1\leq i\leq m$.

Observe that $c_{28}=-c_{17}, c_{38}=-c_{16}, c_{37}=-c_{26}$, we thus simplify the matrix form of $H$ as the following:
\begin{eqnarray*}
\resizebox{\linewidth}{!}{$
H=\begin{bmatrix}
0 & c_{12}& c_{13}& c_{14}&c_{15} &c_{16} &c_{17} & 0 \\
c_{12}^{\ast} & 0 & c_{23} & c_{24}& c_{25}& c_{26}&0 & -c_{17}  \\
c_{13}^{\ast} & c_{23}^{\ast} & 0& c_{34}& c_{35}& 0& -c_{26} & -c_{16}  \\
c_{14}^{\ast} & c_{24}^{\ast} & c_{34}^{\ast}& 0&0 & c_{35}-2c_{17}^{\ast}& c_{25}+2c_{16}^{\ast}& c_{15}-2c_{26}^{\ast} \\
c_{15}^{\ast} & c_{25}^{\ast} & c_{35}^{\ast}& 0 & 0& -c_{34}& -c_{24}& -c_{14}  \\
c_{16}^{\ast} & c_{26}^{\ast} &0 & c_{35}^{\ast}-2c_{17}& -c_{34}^{\ast}& 0& -c_{23}& -c_{13}  \\
c_{17}^{\ast} & 0 & -c_{26}^{\ast}& c_{25}^{\ast}+2c_{16}& -c_{24}^{\ast}& -c_{23}^{\ast}&0 & -c_{12} \\
0 & -c_{17}^{\ast} & -c_{16}^{\ast} & c_{15}^{\ast}-2c_{26}& -c_{14}^{\ast}& -c_{13}^{\ast}&-c_{12}^{\ast} &0
\end{bmatrix}.
$}
\end{eqnarray*}

It follows from the fact that the submatrix formed by $4$-th and $5$-th columns has rank exactly $1$, we have $c_{14}(-c_{14}^{\ast})=c_{15}(c_{15}^{\ast}-2c_{26})$.
Thus at least one of the following cases must happen:
\begin{enumerate}
\item [(10.1)] $c_{14}=c_{15}=0$. We can still assume there is no column containing only zero elements as this is the case that we have already discussed. Thus $c_{26}=0$ which would also lead to $c_{23}=0$.
\item [(10.2)] $c_{26}=c_{15}^{\ast}$.
\end{enumerate}

Similarly, at least one of the following conditions:
\begin{enumerate}
\item [(10.I)] $c_{24}=c_{25}=c_{16}=c_{13}=0$; or
\item [(10.II)] $c_{16}=-c_{25}^{\ast}$
\end{enumerate}
and one of the following conditions:
\begin{enumerate}
\item [(10.A)] $c_{34}=c_{35}=c_{17}=c_{12}=0$; or
\item [(10.B)] $c_{17}=c_{35}^{\ast}$
\end{enumerate}
must apply.

We have already discussed the cases that $c_{12}=c_{17}=0$, $c_{13}=c_{16}=0$ or $c_{23}=c_{26}=0$ previously. Hence the only remaining case is $c_{26}=c_{15}^{\ast}, c_{16}=-c_{25}^{\ast}, c_{17}=c_{35}^{\ast}$. Thus
\begin{eqnarray*}
H=\begin{bmatrix}
0 & c_{12}& c_{13}& c_{14}&c_{15} &-c_{25}^{\ast} &c_{35}^{\ast} & 0 \\
c_{12}^{\ast} & 0 & c_{23} & c_{24}& c_{25}& c_{15}^{\ast}&0 & -c_{35}^{\ast}  \\
c_{13}^{\ast} & c_{23}^{\ast} & 0& c_{34}& c_{35}& 0& -c_{15}^{\ast} & c_{25}^{\ast}  \\
c_{14}^{\ast} & c_{24}^{\ast} & c_{34}^{\ast}& 0&0 & -c_{35}& -c_{25}& -c_{15} \\
c_{15}^{\ast} & c_{25}^{\ast} & c_{35}^{\ast}& 0 & 0& -c_{34}& -c_{24}& -c_{14}  \\
-c_{25} & c_{15} &0 & -c_{35}^{\ast}& -c_{34}^{\ast}& 0& -c_{23}& -c_{13}  \\
c_{35} & 0 & -c_{15}& -c_{25}^{\ast}& -c_{24}^{\ast}& -c_{23}^{\ast}&0 & -c_{12} \\
0 & -c_{35} & c_{25} & -c_{15}^{\ast}& -c_{14}^{\ast}& -c_{13}^{\ast}&-c_{12}^{\ast} &0
\end{bmatrix}.
\end{eqnarray*}

According to $c_{26}=c_{15}^{\ast}$, we have $2(b_2+b_3)(1-4 b_1b_4)=(4b_2b_3-1) (2b_1+2b_4)$ which implies $4(b_1b_2b_3+b_1b_2b_4+b_1b_3b_4+b_2b_3b_4)=b_1+b_2+b_3+b_4$.

\begin{enumerate}
\item [1.] $4b_1b_2+4b_1b_3+4b_2b_3=1$. Thus $b_1+b_2+b_3=4b_1b_2b_3$. However, one can easy to verify that there do not exist three real numbers $b_1,b_2,b_3$ satisfying these two equations.
\item [2.] { $4b_1b_2+4b_1b_3+4b_2b_3\neq 1$. Hence $b_4=\frac{b_1+b_2+b_3-4b_1b_2b_3}{4b_1b_2+4b_1b_3+4b_2b_3-1}$. By substituting the assignment of $b_4$ into Equation~\ref{eq:ratio}, we have
\begin{eqnarray*}
c_{14}&=&p \cdot \bigg( \frac{2(1-4b_1^2)(b_2+b_3)+4b_1(1-4b_2b_3)}{4b_1b_2+4b_1b_3+4b_2b_3-1}\\
&&+\frac{-8b_1(b_2+b_3)+(1-4b_1^2)(1-4b_2b_3)}{4b_1b_2+4b_1b_3+4b_2b_3-1}\emph{i}\bigg);\\
c_{15}&=&p \cdot\bigg(\frac{2(1+4b_1^2)(b_2+b_3)}{4b_1b_2+4b_1b_3+4b_2b_3-1}\\
&&+\frac{(1+4b_1^2)(1-4b_2b_3)}{4b_1b_2+4b_1b_3+4b_2b_3-1}\emph{i}\bigg);\\
c_{23}&=&p \cdot \bigg(\frac{2(1+4b_1^2)(b_3-b_2)}{4b_1b_2+4b_1b_3+4b_2b_3-1}\\
&&+\frac{(1+4b_1^2)(1+4b_2b_3)}{4b_1b_2+4b_1b_3+4b_2b_3-1}\emph{i}\bigg);\\
c_{24}&=& q \cdot \bigg(\frac{2(1-4b_2^2)(b_1+b_3)+4b_2(1-4b_1b_3)}{4b_1b_2+4b_1b_3+4b_2b_3-1}\\
&&+\frac{8b_2(b_1+b_3)-(1-4b_2^2)(1-4b_1b_3)}{4b_1b_2+4b_1b_3+4b_2b_3-1}\emph{i}\bigg);\\
c_{25}&=& q \cdot \bigg(\frac{2(1+4b_2^2)(b_1+b_3)}{4b_1b_2+4b_1b_3+4b_2b_3-1}\\
&&+\frac{(1+4b_2^2)(1-4b_1b_3)}{4b_1b_2+4b_1b_3+4b_2b_3-1}\emph{i}\bigg);\\
c_{13}&=& q \cdot\bigg(\frac{2(1+4b_2^2)(b_1-b_3)}{4b_1b_2+4b_1b_3+4b_2b_3-1}\\
&&-\frac{(1+4b_2^2)(1+4b_1b_3)}{4b_1b_2+4b_1b_3+4b_2b_3-1}\emph{i}\bigg);\\
c_{34}&=&r \cdot\bigg(\frac{2(1-4b_3^2)(b_1+b_2)+4b_3(1-4b_1b_2)}{4b_1b_2+4b_1b_3+4b_2b_3-1}\\
&&+\frac{8b_3(b_1+b_2)-(1-4b_3^2)(1-4b_1b_2)}{4b_1b_2+4b_1b_3+4b_2b_3-1}\emph{i}\bigg);\\
c_{35}&=&r \cdot\bigg(\frac{2(1+4b_3^2)(b_1+b_2)}{4b_1b_2+4b_1b_3+4b_2b_3-1}\\
&&+\frac{(1+4b_3^2)(1-4b_1b_2)}{4b_1b_2+4b_1b_3+4b_2b_3-1}\emph{i}\bigg);\\
c_{12}&=&r \cdot\bigg(\frac{2(1+4b_3^2)(b_2-b_1)}{4b_1b_2+4b_1b_3+4b_2b_3-1}\\
&&+\frac{(1+4b_3^2)(1+4b_1b_2)}{4b_1b_2+4b_1b_3+4b_2b_3-1}\emph{i}\bigg).\\
\end{eqnarray*}

Again, with assistance of symbolic computation package like Mathematica, we can verify that the characteristic polynomial of $H$ contains only even powers. This implies $H$ has nonzero eigenvalue $\lambda$ if and only if it also has eigenvalue $-\lambda$. Therefore, under our assumption that $H$ has only $1$ positive eigenvalue, $H$ also has only $1$ negative eigenvalue.

However, let us consider the $3$-by-$3$ submatrix of $H$ formed by $(5,7,8)$-th rows and $(1,2,3)$-th columns . Its determinant is $(-\emph{i} + 2 b_1) (\emph{i}+ 2 b_1)^2 (1 + 2 \emph{i} b_2) (\emph{i} + 2 b_2)^2 (\emph{i} - 2 b_3)^2 (\emph{i} + 2 b_3)^2 r ((1 + 4 b_1^2) p^2 + (1 + 4 b_2^2) q^2 + (1 + 4 b_3^2) r^2)$. It is always nonzero unless $p=q=r$ or $r=0$. If $r=0$ this implies that $c_{12}=c_{34}=c_{35}=0$. This case has already been covered, Thus $H$ has rank at least $3$ which contradicts our previous conclusion that $H$ has only one positive eigenvalue and only one negative eigenvalue.}
\end{enumerate}
\end{enumerate}

To summarize, under our assumption that $H$ has only one positive eigenvalue, a contradiction always exist in every situation we studied. Hence, $H$ must has at least $2$ positive eigenvalues and at least $2$ negative eigenvalues. This completes our proof.
\end{proof}

\end{document}